\newcommand{\E}{\mathbb{E}}
\newcommand{\F}{\mathbb{F}}
\def\a{{\bf a}}
\def\b{{\bf b}}
\def\c{{\bf c}}
\def\e{{\bf e}}
\def\x{{\bf x}}
\def\y{{\bf y}}
\def\z{{\bf z}}
\def\s{{\bf s}}
\def\q{{\bf q}}
\def\n{{\bf n}}
\def\v{{\bf v}}
\def\w{{\bf w}}
\def\E{{\mathbb E}}
\def\T{{\mathcal T}}
\newcommand{\ket}[1]{| #1 \rangle}
\newcommand{\bra}[1]{\langle #1|}
\newcommand{\be}{\begin{equation}}
\newcommand{\ee}{\end{equation}}
\newcommand{\bea}{\begin{eqnarray}}
\newcommand{\eea}{\end{eqnarray}}
\newcommand{\bes}{\begin{equation*}}
\newcommand{\ees}{\end{equation*}}
\newcommand{\beas}{\begin{eqnarray*}}
\newcommand{\eeas}{\end{eqnarray*}}
\newtheorem*{rep@theorem}{\rep@title}
\newcommand{\newreptheorem}[2]{%
\newenvironment{rep#1}[1]{%
 \def\rep@title{#2 \ref{##1} (restated)}%
 \begin{rep@theorem}}%
 {\end{rep@theorem}}}
\newtheorem{thm}{Theorem}
\newtheorem*{thm*}{Theorem}
\newtheorem{lem}[thm]{Lemma}
\newtheorem*{lem*}{Lemma}
\newtheorem{prop}[thm]{Proposition}
\newtheorem{defn}[thm]{Definition}
\newtheorem{rem}[thm]{Remark}
\title{ Faster quantum-inspired algorithms for solving linear systems}
\author[1]{Changpeng Shao\thanks{changpeng.shao@bristol.ac.uk}}
\author[1,2]{Ashley Montanaro\thanks{ashley.montanaro@bristol.ac.uk}}
\affil[1]{School of Mathematics, University of Bristol, UK}
\affil[2]{Phasecraft Ltd. UK}
\date{\today}
\begin{document}

\maketitle


\begin{abstract}

We establish an improved classical algorithm for solving linear systems in a model analogous to the QRAM that is used by quantum linear solvers. Precisely, for the linear system $A\x = \b$, we show that there is a classical algorithm that outputs a data structure for $\x$ allowing sampling and querying to the entries, where $\x$ is such that $\|\x - A^{+}\b\|\leq \epsilon \|A^{+}\b\|$. This output can be viewed as a classical analogue to the output of quantum linear solvers.
The complexity of our algorithm is  $\widetilde{O}(\kappa_F^4 \kappa^2/\epsilon^2 )$, where $\kappa_F = \|A\|_F\|A^{+}\|$ and $\kappa = \|A\|\|A^{+}\|$. This improves the previous best algorithm [Gily{\'e}n, Song and Tang,  arXiv:2009.07268] of complexity $\widetilde{O}(\kappa_F^6 \kappa^6/\epsilon^4)$.
Our algorithm is based on the randomized Kaczmarz method, which is a particular case of stochastic gradient descent. We also find that when $A$ is row sparse, this method already returns an approximate solution $\x$ in time $\widetilde{O}(\kappa_F^2)$, while the best quantum algorithm known returns $\ket{\x}$ in time $\widetilde{O}(\kappa_F)$ when $A$ is stored in the QRAM data structure. As a result, assuming access to QRAM and if $A$ is row sparse, the speedup based on current quantum algorithms is quadratic.

\end{abstract}


\section{Introduction}

Since the discovery of the Harrow-Hassidim-Lloyd algorithm \cite{harrow2009quantum}
for solving linear systems, many quantum algorithms for solving machine learning problems were proposed, e.g. \cite{kerenidis2017quantum,kerenidis2019qmeans,kerenidis2020quantum,lloyd2014quantum,rebentrost2014quantum,biamonte2017quantum}. Most of them claimed that quantum computers could achieve exponential speedups over classical algorithms. However, as shown by Tang \cite{tang2019quantum} and some subsequent works \cite{chia2020sampling,gilyen2018quantum,jethwani2019quantum,tang2018quantum,chia2019quantum,gilyen2020improved,chepurko2020quantum,chia2018quantum2} (currently known as quantum-inspired algorithms), for many previously discovered quantum machine learning algorithms, the quantum computer actually only achieves a polynomial speedup in the dimension.
Tang's work makes us reconsider the advantages of quantum computers in machine learning. For instance, one may wonder what is the best possible separation between quantum and classical computers for solving a given machine learning problem?

Here we focus on the solving of linear systems, which is a fundamental problem in machine learning. 
For a linear system $A\x=\b$, as far as we know, the currently best quantum algorithm seems to be the one building on the technique of block-encoding.
For simplicity, assume that the singular values of $A$ lie in $[1/\kappa,1]$, where  $\kappa$ is the condition number of $A$.
Let $\alpha \in \mathbb{R}^+$, if $A/\alpha$ is efficiently encoded as a block of a unitary, the complexity to obtain $\ket{A^{+}\b}$ up to error $\epsilon$ is $\widetilde{O}(\alpha\kappa)$~\cite{chakraborty_et_al:LIPIcs:2019:10609}, where the $\widetilde{O}$ notation hides logarithmic factors in all parameters (including $\kappa, \epsilon$ and the size of $A$). For example, when we have the QRAM data structure for $A$, we can take $\alpha= \|A\|_F$.\footnote{In fact,  $\alpha = \mu(A) $ for some $\mu(A) \leq \|A\|_F$ (see \cite{gilyen2019quantum}). The definition of $\mu(A)$ is quite complicated. In this paper, for convenience of the comparison we just use $\|A\|_F$. }
When $A$ is $s$-sparse and given in the sparse-access input model, we can take $\alpha = s\|A\|_{\max}$. 
In the classical case, the quantum-inspired algorithm given in \cite{chia2020sampling} costs $\widetilde{O}(\kappa_F^6 \kappa^{22}/\epsilon^6)$ to output a classical analogue to $\ket{A^{+}\b}$, where $\kappa_F = \|A\|_F \|A^{+}\|$,\footnote{In the literature, $\kappa_F$ does not have a standard name. It is named the scaled condition number in \cite{strohmer2009randomized,gower2015randomized}.} and $\epsilon$ is the accuracy. This result was  recently improved to $\widetilde{O}(\kappa_F^6 \kappa^6/\epsilon^4)$~\cite{gilyen2020improved}. Although the quantum-inspired algorithms are polylog in the dimension, the heavy dependence on $\kappa_F, \kappa, \epsilon$ may make them hard to use in practice \cite{arrazola2019quantum}. 
So it is necessary to improve the efficiency of quantum-inspired algorithms. Meanwhile, this can help us better understand the quantum advantages in certain problems.

QRAM is a widely used data structure in many quantum machine learning algorithms (e.g.~\cite{kerenidis2017quantum,kerenidis2019qmeans,kerenidis2020quantum}). It allows us efficiently to  prepare a quantum state corresponding to the data and to perform quantum linear algebraic operations.
The quantum-inspired algorithms usually assume a data structure which is a classical analogue to the QRAM. For a vector, this data structure allows us to do the sampling and query operations to the entries. The sampling operation can be viewed as measurements of the quantum state. 

As for the problem of solving linear systems,
in the QRAM model, the input of quantum algorithms is the QRAM data structures for $A$ and $\b$, and the output is $\ket{\x}$ such that $\|\ket{\x} - \ket{A^{+}\b}\|\leq \epsilon$. 
In the classical setting, the input of quantum-inspired algorithms is the classical analogue of QRAM data structures for $A$ and $\b$, and the output is a data structure for $\x$ such that $\|\x - A^{+}\b\| \leq \epsilon \|A^{+}\b\|$. This makes the comparison of quantum/quantum-inspired algorithms reasonable. There are also some other quantum models, such as the sparse-access input model. This allows efficient quantum algorithms for solving sparse linear systems \cite{childs2017quantum}. However, because of the similarity between QRAM and the model used in quantum-inspired algorithms, in this paper we will mainly focus on the QRAM model for quantum linear solvers.

Randomized iterative methods have received a lot of attention recently in solving linear systems because of their connection to the stochastic gradient descent method. 
The randomized Kaczmarz method \cite{strohmer2009randomized} is a typical example.  It was first proposed by Kaczmarz in 1937 \cite{karczmarz1937angenaherte}, and rediscovered by Gordon,  Bender and Herman in 1970 \cite{gordon1970algebraic}. After that, there appear many generalizations \cite{gower2015randomized,moorman2020randomized,necoara2019faster,richtarik2020stochastic}. Kaczmarz method is a particular case of the stochastic gradient descent algorithm for solving linear systems. In its simplest version, at each step of iteration, it randomly chooses a row of the matrix with probability corresponding to the norm of that 
row, then finds the closest solution that satisfies this linear constraint. Geometrically, the next point is the orthogonal projection of the previous one onto the hyperplane defined by the chosen linear constraint. The simple iterative structure and clear geometric meaning greatly simplifies the analysis of this method and makes it popular in applications like computed tomography \cite{censor1997parallel}. 
Usually, the randomized iterative methods need to do some importance sampling according to the norms of the rows. Sampling access is made available by the data structure for quantum-inspired algorithms and the QRAM for quantum linear solvers. So in this paper, for the fairness of comparison we also assume  a similar data structure for the randomized iterative methods.

\subsection{Main results}

In this paper, we propose two quantum-inspired algorithms for solving linear systems. One is based on the randomized Kaczmarz method \cite{strohmer2009randomized}, and the other one is based on the randomized coordinate descent method \cite{leventhal2010randomized}. The second algorithm only works for symmetric positive-definite (SPD) matrices. Our results are summarized in Table \ref{tab:comparison}.

{\renewcommand
\arraystretch{1.3}
\begin{table}[h]
\centering
\begin{tabular}{|c|c|c|c|c|} 
 \hline
  & Complexity & Reference & Assumptions  \\ \hline
 Quantum algorithm  & $\widetilde{O}(\kappa_F)$ & \cite{chakraborty_et_al:LIPIcs:2019:10609}, 2018 & \\ \hline
 \multirow{2}{*}{Randomized classical algorithm}    & $\widetilde{O}( s\kappa_F^2  )$ & \cite{strohmer2009randomized}, 2009 & row sparse  \\ 
  & $\widetilde{O}(s {{\rm Tr}(A) \|A^{+}\| })$ &
  \cite{leventhal2010randomized}, 2010 & sparse, SPD \\  \hline
 \multirow{3}{*}{Quantum-inspired algorithm}  & $\widetilde{O}(\kappa_F^6 \kappa^6/\epsilon^4)$ & \cite{gilyen2020improved}, 2020 & \\ 
  & $\widetilde{O}(\kappa_F^4 \kappa^2/\epsilon^2 )$ & Theorem \ref{thm:improvements} &  \\ 
  & $\widetilde{O} ( {{\rm Tr}(A)^2\|A^{+}\|^2\kappa}/{\epsilon^2} )$ &
  Theorem \ref{thm:spd} & SPD \\
 \hline
\end{tabular}
\caption{Comparison of different algorithms for solving the linear system $A\x=\b$, where $\kappa_F=\|A\|_F \|A^{+}\|, \kappa = \|A\| \|A^{+}\|$ and $s$ is the row sparsity. SPD = symmetric positive definite.}
\label{tab:comparison}
\end{table}
}

In the table, the quantum algorithm is based on the QRAM model. The randomized classical and quantum-inspired algorithms assume a data structure that is a classical analogue to the QRAM (see Section \ref{sec:Preliminaries of quantum-inspired algorithms}). So all those algorithms are based on similar data structures, which makes the comparison in the table fair enough. 

\begin{itemize}
    \item When $A$ is dense, the complexity of our algorithm is $\widetilde{O}(\kappa_F^4 \kappa^2/\epsilon^2 )$. If $A$ is additionally SPD, the complexity becomes $\widetilde{O} ( {{\rm Tr}(A)^2\|A^{+}\|^2\kappa}/{\epsilon^2} )$. This reduces the complexity of the previous quantum-inspired algorithm given by Gily\'{e}n et al. \cite{gilyen2020improved} by at least a 4th power in terms of the dependence on $\kappa$, improves the dependence on $\kappa_F$ from 6th to 4th, and the dependence on$1/\epsilon$ from 4th to 2nd. One result not shown in the table is about a description of the solution. In \cite{gilyen2020improved}, the authors define an $s$-sparse description of a vector $\x$ as an $s$-sparse vector $\y$ such that $\x = A^\dag \y$. The sparsity of $\y$ determines the runtime of querying one entry of $\x$. In \cite{gilyen2020improved}, they obtain an $\widetilde{O}(\kappa_F^2\kappa^2/\epsilon^2)$-sparse description of the approximate solution in cost $\widetilde{O}(\kappa_F^6\kappa^2/\epsilon^4)$. In comparison, we obtain an $\widetilde{O}(\kappa_F^2)$-sparse description in cost $\widetilde{O}(\kappa_F^4\kappa^2/\epsilon^2)$ by Theorem \ref{thm:improvements}. 
    %
    \item When $A$ is row sparse, i.e., each row is sparse,  the randomized Kaczmarz method can find an approximate sparse solution in time $\widetilde{O}( s\kappa_F^2)$, where $s$ is the row sparsity. Moreover, the output is a vector, which is stronger than the outputs of quantum and quantum-inspired algorithms. Assuming $s=\widetilde{O}(1)$, this result means that the quantum speedup of solving row sparse linear systems is quadratic if using the QRAM data structure. It also means that to explore large quantum speedups in solving sparse linear systems, we may need to move our attentions to other quantum models.
    For example, it is known that when $A$ is row and column sparse, the quantum algorithm has complexity $\widetilde{O}(s\|A\|_{\max}\|A^{+}\|)$ in the sparse access input model \cite{chakraborty_et_al:LIPIcs:2019:10609}, which is better than the quantum result we show in the table. 
    %
    \item When $A$ is sparse and SPD, the randomized coordinate descent method can find an approximate sparse solution (also a sparse vector) in time $\widetilde{O}( s {\rm Tr}(A) \|A^{+}\|  )$. In this special case, it may happen that quantum computers do not achieve a speedup in solving linear systems. For example, suppose $A$ is a density matrix with $\|A\| = \Theta(1), s = \widetilde{O}(1)$, then the randomized coordinate descent method costs $\widetilde{O}(\kappa)$, which is also the cost of the quantum linear solver. As shown in \cite{orsucci2021solving}, to solve a sparse SPD linear system in a quantum computer, the dependence on $\kappa$ is at least linear. However, for some special SPD linear systems, quantum algorithms can achieve quadratic speedups in terms of $\kappa$.\footnote{In \cite{orsucci2021solving}, the authors mentioned two special cases. Case 1: Access to a block-encoding of $I-\eta A$ and $\|A^{+}\b\|\in O(\kappa)$, where $\eta \in(0,1]$. Case 2: $A$ is a sum of SPD local Hamiltonians, $\b$ is sparse and a parameter $\gamma \in O(1)$ that quantifies the overlap of $\b$ with the subspace where $A$ is non-singular.} However, the special cases mentioned in \cite{orsucci2021solving} seem not directly related to the case we mentioned above.
    \item In Table \ref{tab:comparison}, the complexity of the quantum algorithm is the cost to obtain the quantum state corresponding to the solution. For quantum-inspired algorithms, the table shows the complexity to obtain the data structure for the solution that allows sampling and querying. This output can be viewed as a classical analogue to the output of the quantum algorithm. If we want to estimate the norm of the solution, then there is an extra factor $1/\epsilon$ for the quantum algorithm \cite{chakraborty_et_al:LIPIcs:2019:10609} and an extra factor $1/\epsilon^2$ for the quantum-inspired algorithms \cite{chia2020sampling} (also see Lemma \ref{lem1}). However, for the randomized classical algorithms, the complexity does not change because their outputs are sparse vectors whose sparsity has the same order as the time complexity. So to estimate the norm of the solution, the quantum and quantum-inspired algorithms are worse than the randomized classical algorithms in terms of the precision in the row sparse case.
    \item
    For solving a row sparse linear system, the table shows that the complexities of quantum and classical randomized algorithms are logarithmic in the precision $\epsilon$. However, if we write down the complexity more precisely, the classical randomized algorithms may have better dependence on $\epsilon$, at least compared with the currently best quantum algorithm.
    Indeed, for the classical randomized algorithms, the dependence on $\epsilon$ is $O(\log(1/\epsilon))$, while it is $O(\log^2(1/\epsilon))$ for the quantum algorithm \cite{chakraborty_et_al:LIPIcs:2019:10609}. If the linear system is also column sparse,  the dependence on $\epsilon$ can be reduced to $O(\log(1/\epsilon))$ in the quantum case \cite{lin2020optimal}.
    %
\end{itemize}

\subsection{Summary of the techniques}

Our main algorithm is based on a generalization of Kaczmarz method. As briefly introduced above (also see Section \ref{sec:Randomized Kaczmarz algorithm} for more details), the Kaczmarz method aims to solve one linear constraint at each step of the iteration, which can be viewed as a particular case of the stochastic gradient method. To design an efficient quantum-inspired algorithm, we concentrate on the dual form of the Kaczmarz method. Namely, introducing $\y$ such that $\x = A^\dag \y$, then focusing on the iteration on $\y$. The linear system $A\x=\b$ can be reformulated as the least squares problem (LSP) $\x = \arg\min_{\x}\|A\x-\b\|$. It turns out that $\y$ converges to the optimal solution of the dual problem of the LSP.  Our idea is to apply the iterative method to find a sparse approximate solution $\y$ of the dual problem. The sparsity is guaranteed by the iterative method. Now $\x = A^\dag \y$ gives an approximate solution of the LSP. Since $\y$ is sparse, it is easy to query an entry of $\x$. To sample an entry of $\x$, we can use the rejection sampling idea. 

The technical part of this paper is the complexity analysis of the rejection sampling method. However, it turns out that the main cost of the above algorithm comes from the computation of $\y$, whose complexity analysis is straightforward. It is time-consuming if we directly use the iterative method because it relates to the calculation of inner products. To alleviate this, we modify the iteration by using the sampling idea. We prove that this modified method has the same convergence rate as the original one by choosing some parameters appropriately.

{\bf Notation.}
For a matrix $A$, we use $A^\dag$ to denote its conjugate transpose, $A^{+}$ to denote its Moore-Penrose inverse. 
The scaled condition number is $\kappa_F = \|A\|_F \|A^{+}\|$, and the condition number is $\kappa = \|A\|\|A^{+}\|$, where $\|\cdot\|_F$ is the Frobenius norm and $\|\cdot\|$ is the spectral norm. The row sparsity of $A$ is defined as the maximum of the number of nonzero zero entries of any row.
For a unit vector $\v$, we sometimes denote it as $\ket{\v}$, and denote $\v^\dag$ as $\bra{\v}$. For two vectors $\a,\b$, we use $\langle \a|\b\rangle$ or $\a\cdot \b$ to denote their inner product. If $m$ is an integer, we define $[m]:=\{1,2,\ldots,m\}$. For matrix $A$, we denote its $i$-th row as $A_{i*}$, its $j$-th column as $A_{*j}$ and the $(i,j)$-th entry as $A_{i,j}$. For a vector $\v$, we denote the $i$-th component as $v_i$. We use $\{\e_1,\ldots,\e_m\}$ to denote the standard basis of $\mathbb{C}^m$.

\section{Randomized Kaczmarz algorithm}
\label{sec:Randomized Kaczmarz algorithm}

Let $A$ be an $m\times n$ matrix, $\b$ be an $m\times 1$ vector.
Solving the linear system $A\x = \b$ can be described as a quadratic optimization problem, i.e., the so-called least squares problem (LSP) 
\be \label{LSP}
\arg\min_{\x \in \mathbb{C}^n } \quad \|A\x - \b\|^2.
\ee
Classically, there are many algorithms for solving the LSP \cite{golub13}. The best known classical algorithm might be the conjugate-gradient (CG) method. Compared to CG, the Kaczmarz algorithm has received much attention recently due to its simplicity. For instance, at each step of iteration, CG uses the whole information of the matrix $A$ and $O(mn)$ operations, while Kaczmarz method only uses one row of $A$ and  $O(n)$ operations. So Kaczmarz method becomes effective when the entire matrix (data set) cannot be loaded into memory.

\begin{defn}[Randomized Kaczmarz algorithm \cite{strohmer2009randomized}]

Let $\x_0$ be an arbitrary initial approximation to the solution of (\ref{LSP}). For $k = 0, 1, \ldots$, compute
\be \label{Randomized Kaczmarz algorithm}
\x_{k+1}  =  \x_k  + \frac{b_{r_k} - \langle A_{r_k*}| \x_k \rangle   }{\|A_{r_k*}\|^2} A_{r_k*},
\ee
where $A_{r_k*}$ is the $r_k$-th row of $A$, and $r_k$ is chosen from $[m]$ randomly with probability proportional to $\|A_{r_k*}\|^2$.

\end{defn}

The Kaczmarz algorithm has a clear geometric meaning, see Figure \ref{fig:Illustration of Kaczmarz algorithm}. Geometrically, $\x_{k+1}$ is the orthogonal projection of $\x_k$ onto the hyperplane $A_{r_k*}\cdot \x = b_{r_k}$. So at each step of iteration, this method finds the closest vector that satisfies the chosen linear constraint.

\begin{figure}[h]
\centering

\begin{tikzpicture}[scale=0.8]
\draw[line width=0.3mm] (-3/2,0) -- (7,0) 
node[label=right:{$A_{1*} \cdot \x = b_1$}]{}; 
\draw[line width=0.3mm] (-1,-1) -- (4,4)
node[label=right:{$A_{2*} \cdot \x = b_2$}]{}; 
\draw[line width=0.3mm,dashed] 
(6,3) node[circle,fill=black,minimum size=0.1pt,scale=0.5,label=right:{$\x_0$}]{} -- 
(6,0) node[circle,fill=black,minimum size=0.1pt,scale=0.5,label=below:{$\x_1$}]{}; 
\draw[line width=0.3mm,dashed] 
(6,0) -- 
(3,3) node[circle,fill=black,minimum size=0.1pt,scale=0.5,label=left:{$\x_2$}]{}; 
\draw[line width=0.3mm,dashed] 
(3,3) -- 
(3,0) node[circle,fill=black,minimum size=0.1pt,scale=0.5,label=below:{$\x_3$}]{}; 
\draw[line width=0.3mm,dashed] 
(3,0) -- 
(3/2,3/2) node[circle,fill=black,minimum size=0.1pt,scale=0.5,label=left:{$\x_4$}]{}; 
\draw[line width=0.3mm,dashed] 
(3/2,3/2) -- 
(3/2,0) node[circle,fill=black,minimum size=0.1pt,scale=0.5,label=below:{$\x_5$}]{}; 
\draw[line width=0.3mm,dashed] 
(3/2,0) -- 
(3/4,3/4) node[circle,fill=black,minimum size=0.1pt,scale=0.5,label=left:{$\x_6$}]{}; 
\draw[line width=0.3mm,dashed] 
(3/4,3/4) -- 
(3/4,0); 
\draw
(0,0) node[circle,fill=black,minimum size=0.1pt,scale=0.5,label=below:{$\x_*$}]{}; 
\end{tikzpicture}

\caption{Illustration of Kaczmarz algorithm when $m=2$, where $\x_*$ is the optimal solution.}
\label{fig:Illustration of Kaczmarz algorithm}
\end{figure}

From the definition, we can simplify the notation by defining
\be
\tilde{\b} = {\rm diag}(\|A_{i*}\|^{-1}:i\in[m]) \b, \quad
\widetilde{A} = {\rm diag}(\|A_{i*}\|^{-1}:i\in[m]) A,
\ee
so that
$\tilde{b}_i = b_{i}/\|A_{i*}\| $ and 
$\widetilde{A}_{i*} = A_{i*}/\|A_{i*}\|$.
Then we can rewrite 
(\ref{Randomized Kaczmarz algorithm}) as
\be \label{projection}
\x_{k+1}  =  \x_k  + \left( \tilde{b}_{r_k} - \langle \widetilde{A}_{r_k*} | \x_k \rangle  \right) \widetilde{A}_{r_k*}
= \left(I - \ket{\widetilde{A}_{r_k*}} \bra{\widetilde{A}_{r_k*}} \right) \x_k + \tilde{b}_{r_k} \widetilde{A}_{r_k*} .
\ee
In \cite{shao2020row}, the Kaczmarz algorithm has been quantized based on the description (\ref{projection}). As an orthogonal projector, it is not hard to extend $I - \ket{\widetilde{A}_{i*}} \bra{\widetilde{A}_{i*}}$ to a block of a unitary 
operator. For example, a unitary of the form $I\otimes (I - \ket{\widetilde{A}_{i*}} \bra{\widetilde{A}_{i*}}) + X \otimes \ket{\widetilde{A}_{i*}} \bra{\widetilde{A}_{i*}}$, where $X$ is Pauli-X.  
This unitary is closely related to the Grover diffusion operator when we apply it to $|-\rangle|\x\rangle$.

\begin{lem}[Theorem 2 of  \cite{strohmer2009randomized}]
\label{lem:converge rate}
Assume that $A\x=\b$ is consistent.
Let $\x_*=A^+\b$ be a solution of  (\ref{LSP}). Then the randomized Kaczmarz algorithm converges to $\x_*$ in expectation, with the average error
\be
\E[\|\x_T - \x_*\|^2] 
\le (1 - \kappa_F^{-2})^T \|\x_0 - \x_*\|^2,
\ee
where $\kappa_F = \|A\|_F \|A^{+}\|$.
\end{lem}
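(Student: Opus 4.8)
The plan is to run the classical Strohmer--Vershynin argument, which turns the geometric description in (\ref{projection}) into a one-step contraction in expectation and then iterates. Throughout, take $\x_*$ to be the minimum-norm solution of (\ref{LSP}); for a consistent system it satisfies $\langle \widetilde{A}_{i*} | \x_* \rangle = \tilde b_i$ for every $i\in[m]$, so $\x_*$ lies on every hyperplane $H_i := \{\x : \langle \widetilde{A}_{i*} | \x \rangle = \tilde b_i\}$, and it lies in the row space of $A$.

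\textbf{Step 1 (Pythagoras at one step).} By (\ref{projection}), $\x_{k+1}-\x_k$ is a scalar multiple of the unit vector $\widetilde{A}_{r_k*}$, while $\x_{k+1}$ and $\x_*$ both lie on $H_{r_k}$, so $\x_{k+1}-\x_*$ is orthogonal to $\widetilde{A}_{r_k*}$ and hence to $\x_{k+1}-\x_k$. Therefore
\be
\|\x_{k+1}-\x_*\|^2 = \|\x_k-\x_*\|^2 - \|\x_{k+1}-\x_k\|^2 .
\ee
\textbf{Step 2 (Identify the gain).} Since $\tilde b_{r_k} = \langle \widetilde{A}_{r_k*} | \x_* \rangle$, (\ref{projection}) gives $\x_{k+1}-\x_k = \langle \widetilde{A}_{r_k*} | \x_*-\x_k \rangle\, \widetilde{A}_{r_k*}$, and as $\widetilde{A}_{r_k*}$ is a unit vector, $\|\x_{k+1}-\x_k\|^2 = |\langle \widetilde{A}_{r_k*} | \x_k-\x_* \rangle|^2$.

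\textbf{Step 3 (Average over the random row).} Using $\Pr[r_k=i] = \|A_{i*}\|^2/\|A\|_F^2$ and $\widetilde{A}_{i*}=A_{i*}/\|A_{i*}\|$, the $\|A_{i*}\|^2$ factors cancel and the conditional expectation telescopes into a Frobenius norm:
\be
\E\big[\|\x_{k+1}-\x_k\|^2 \,\big|\, \x_k\big]
= \sum_{i=1}^m \frac{\|A_{i*}\|^2}{\|A\|_F^2}\cdot\frac{|\langle A_{i*} | \x_k-\x_* \rangle|^2}{\|A_{i*}\|^2}
= \frac{\|A(\x_k-\x_*)\|^2}{\|A\|_F^2}.
\ee
\textbf{Step 4 (Condition-number lower bound).} Each update (\ref{Randomized Kaczmarz algorithm}) adds a multiple of a row of $A$, so if $\x_0$ lies in the row space of $A$ (e.g. $\x_0=0$) then every $\x_k$, and hence $\x_k-\x_*$, lies in the row space of $A$; thus $\|A(\x_k-\x_*)\| \ge \|\x_k-\x_*\|/\|A^{-1}\|$ since $A^{-1}$ acts as a left inverse on that subspace. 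Combining Steps 1--4,
\be
\E\big[\|\x_{k+1}-\x_*\|^2 \,\big|\, \x_k\big] \le \Big(1 - \frac{1}{\|A\|_F^2\|A^{-1}\|^2}\Big)\|\x_k-\x_*\|^2 = (1-\kappa_F^{-2})\|\x_k-\x_*\|^2 .
\ee
\textbf{Step 5 (Iterate).} Taking total expectations and inducting on $k=0,1,\dots,T-1$ yields $\E[\|\x_T-\x_*\|^2] \le (1-\kappa_F^{-2})^T\|\x_0-\x_*\|^2$.

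The only subtle point is Step 4: the bound $\|Av\|\ge\|v\|/\|A^{-1}\|$ is valid only for $v$ in the row space of $A$, so one must check the iterates never leave that subspace (automatic in the full-column-rank case, and otherwise guaranteed by choosing $\x_0$ in the row space). Everything else is the two Pythagorean identities plus the cancellation in Step 3, so I expect no real difficulty there.
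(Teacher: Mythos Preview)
Your proof is correct and follows the classical Strohmer--Vershynin argument that the paper cites (and reproduces in its proof of Proposition~\ref{prop:convergence}): the Pythagorean identity from the orthogonal projection, the cancellation of $\|A_{i*}\|^2$ when averaging over the row distribution to get $\|A(\x_k-\x_*)\|^2/\|A\|_F^2$, and the condition-number bound $\|Av\|\ge\|v\|/\|A^{-1}\|$. Your handling of the row-space issue in Step~4 is in fact more careful than the paper's version, which simply assumes $A(\x_k-\x_*)\neq 0$ before convergence.
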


The above result is stated for consistent linear systems, while the Kaczmarz algorithm also works for inconsistent linear systems, for example, see \cite{moorman2020randomized}. The convergence rate does not change, the error bound is affected by an extra term depending on $\min\|A\x-\b\|$. Our results (especially the main Theorem \ref{thm:improvements}) stated below are indeed for general cases -- consistent and inconsistent.

The above result holds for all $\x_0$,  so throughout this paper, we shall assume $\x_0=0$ for simplicity.
In the above lemma, by Markov's inequality, with high probability 0.99, we have $\|\x_T - \x_*\|^2 \le 100(1 - \kappa_F^{-2})^T \| \x_*\|^2$. 
To make sure the error is bounded by $\epsilon^2\| \x_*\|^2$ with high probability,
it suffices to choose 
\be
T = O(\kappa_F^2 \log(1/\epsilon)).
\ee

From (\ref{Randomized Kaczmarz algorithm}), we know that there exist $y_{k,0},\ldots,y_{k,k-1}$ such that
$
\x_k = \sum_{j=0}^{k-1} y_{k,j} A_{r_j*},
$
that is $\x_k = A^\dag \y_k$ for some vector $\y_k$.  One obvious fact is that $\y_k$ is at most $k$-sparse. In \cite{gilyen2020improved}, $\y_k$ is called the sparse description of $\x_k$. So this description comes easily in the Kaczmarz method. Actually, $\y_k$ converges to the optimal solution of the dual problem of the LSP (\ref{LSP}).
More precisely, an alternative formulation of LSP is to find the least-norm solution of the linear system
\[
\min_{\x\in \mathbb{C}^n} \quad \frac{1}{2}\|\x\|^2, \quad {\rm s.t.} \quad  A\x=\b.
\]
The dual problem takes the form:
\be \label{dual LSP}
\min_{\y\in \mathbb{C}^m}\quad g(\y) := \frac{1}{2}\|A^\dag \y\|^2 - \b\cdot \y.
\ee
The randomized Kaczmarz algorithm (\ref{Randomized Kaczmarz algorithm}) is equivalent to one step of the stochastic gradient descent  method \cite{nemirovski2009robust} applied to $\frac{1}{2}(A_{r_k*}\cdot \x - b_{r_k})^2$ with stepsize $1/\|A_{r_k*}\|^2$. It is also equivalent to one step of the randomized coordinate descent method \cite{nesterov2012efficiency} applied to the dual problem (\ref{dual LSP}). Namely, when we apply this method to the $r_k$-th component, the gradient is $\nabla_{r_k} g = \langle A_{r_k*} |  A^\dag| \y_{k} \rangle - b_{r_k}$. Setting the stepsize as $1/\|A_{r_k*}\|^2$ leads to the updating 
\be \label{updating for sparse case}
\y_{k+1} = \y_{k} +  \frac{b_{r_k}  - \langle A_{r_k*} |  A^\dag| \y_{k} \rangle }{\|A_{r_k*}\|^2}  \e_{r_k},
\ee
where $\{\e_1,\ldots,\e_m\}$ is the standard basis of $\mathbb{C}^m$.
We can recover the original Kaczmarz iteration (\ref{Randomized Kaczmarz algorithm}) by multiplying $A^\dag$ on both sides of (\ref{updating for sparse case}).


The Kaczmarz method can be generalized to use multiple rows at each step of iteration \cite{gower2015randomized}. Instead of orthogonally projecting to a hyperplane determined by one linear constraint, we can orthogonally project to a vector space defined by several linear constraints. This idea is similar to the minibatch stochastic gradient descent. However, this generalization needs to compute the pseudoinverse of certain matrices. A simple pseudoinverse-free variant is as follows \cite{moorman2020randomized}:
\be \label{Kaczmarz method in average}
\x_{k+1} = \x_k + \frac{\alpha}{2q}\sum_{i \in \T_k} \frac{b_{i} - \langle A_{i*}| \x_k \rangle   }{\|A_{i*}\|^2} A_{i*},
\ee
where $\T_k$ is a random set of $q$ row indices sampled with replacement, $\alpha$ is known as the relaxation parameter. 
Each index $i\in[m]$ is put into $\T_k$ with probability proportional to $\|A_{i*}\|^2$. For instance, let $\alpha = 2$, then (\ref{Kaczmarz method in average}) can be rewritten in a parallel form
\bes
\x_{k+1} = \frac{1}{q} \sum_{i \in \T_k} \left( \x_k +   \frac{b_{i} - \langle A_{i*}| \x_k \rangle   }{\|A_{i*}\|^2} A_{i*} \right) .
\ees
It was proved in \cite[Corollary 3]{moorman2020randomized} that if $\alpha = q = \|A\|_F^2/\|A\|^2$ (also see a related proof in Appendix \ref{app for the convergence rate}), 
\be
\E[\|\x_T - \x_*\|^2] \leq
\left( 1 - \kappa^{-2}  \right)^T  \|\x_0 - \x_*\|^2.
\ee
So the number of iterations required to achieve error $\epsilon$ via (\ref{Kaczmarz method in average}) is $O(\kappa^2\log(1/\epsilon))$.

\section{Preliminaries of quantum-inspired algorithms}
\label{sec:Preliminaries of quantum-inspired algorithms}

This section briefly recalls some definitions and results about quantum-inspired algorithms that will be used in this paper. The main reference is \cite{chia2020sampling}.

\begin{defn}[Query access] 
For a vector $\v \in \mathbb{C}^n$, we have $Q(\v)$, query access to $\v$ if for all $i\in[n]$, we can query $v_i$. Likewise, for a matrix $A\in \mathbb{C}^{m\times n}$, we have $Q(A)$ if for all $(i, j) \in [m] \times [n]$, we can query $A_{ij}$. Let $\q(\v)$ (or $\q(A)$) denote the (time) cost of such a query.
\end{defn}

\begin{defn}[Sampling and query access to a vector]
For a vector $\v \in \mathbb{C}^n$, we have $SQ(\v)$, sampling and query access to $\v$, if we can:
\begin{enumerate}
    \item Sample: obtain independent samples $i\in[n]$ following the distribution $\mathcal{D}_{\v} \in \mathbb{R}^n$, where $\mathcal{D}_{\v}(i) = |v_i|^2/\|\v\|^2$.
    \item Query: query entries of $\v$ as in $Q(\v)$;
    \item Norm: query $\|\v\|$.
\end{enumerate}
Let $\q(\v), \s(\v)$, and $\n(\v)$ denote the cost of querying entries, sampling indices, and querying the norm respectively. Further define $\s\q(\v) := \q(\v) + \s(\v) + \n(\v)$.
\end{defn}

\begin{defn}[Sampling and query access to a matrix]
\label{defn:Data structure for matrices}

For any matrix $A \in \mathbb{C}^{m\times n}$, we have $SQ(A)$ if we have $SQ(A_{i*})$ for all $i\in[m]$ and $SQ(\a)$, where $\a$ is the vector of row norms. We define $\s\q(A)$ as the cost to obtain $SQ(A)$.
\end{defn}

The sampling and query access defined above can be viewed as a classical analogue to the QRAM data structure, which is widely used in many quantum machine learning (QML) algorithms. In QML, this data structure can be used to efficiently prepare the quantum states of vectors, which are usually the inputs of the QML algorithms. For matrices, it allows us to find an efficient block-encoding so that many quantum linear algebraic techniques become effective. However, the QRAM (and Definition \ref{defn:Data structure for matrices}) is a quite strong data structure because it contains a lot of information (e.g. $\|A_{i*}\|$ and $\|A\|_F$) about $A$, which is usually not easy to obtain. For instance, it usually needs $\Theta(\|A\|_0)$ operations to build this data structure, which $\|A\|_0$ is the number of nonzero entries of $A$. For quantum-inspired algorithms, we need to use certain sampling techniques to make sure that the complexity has a polylog dependence on the dimension, so it seems quite natural to use the above data structure. In comparison, there are also some other efficient quantum models, e.g., sparse-access model, in which the QML algorithms are still efficient.

The sampling operation on $\v$ can be viewed as a classical analogue of the measurement of the quantum state $\ket{\v}=\|\v\|^{-1} \sum_{i\in[n]} v_i \ket{i}$ in the computational basis.  For some problems, we do not have the state $\ket{\v}$ exactly. Instead, what we have is a state of the form $\sin(\theta) \ket{\v} \ket{0} + \cos(\theta) \ket{\w}\ket{1}$. Then the probability of obtaining $\ket{i,0}$ is $\frac{v_i^2}{\|\v\|^2}\sin^2(\theta)$.
This leads to the definition of oversampling for quantum-inspired algorithms.

\begin{defn}[Oversampling and query access]
\label{defn:Oversampling and query access}
For $\v \in \mathbb{C}^n$ and $\phi\geq 1$, we have $SQ_{\phi}(\v)$, $\phi$-oversampling and query access to $\v$, if we have $Q(\v)$ and $SQ(\tilde{\v})$ for  a vector $\tilde{\v} \in \mathbb{C}^n$ satisfying $\|\tilde{\v}\|^2 = \phi \|\v\|^2$ and $|\tilde{v}_i|^2 \geq |v_i|^2$ for all $i\in[n]$.
We denote $\s_{\phi}(v) = \s(\tilde{\v}), \q_{\phi}(v) = \q(\tilde{\v}), \n_{\phi}(v) = \n(\tilde{\v})$, and $\s\q_{\phi}(\v) = 
\s_{\phi}(v)+\q_{\phi}(v)+\q(v)+\n_{\phi}(v)$.

\end{defn}

For example, if we view $\v$ as the vector $(\v,0,\ldots,0)$ and $\tilde{\v}$ as the vector $(\v, \|\v\|\cot(\theta) \w)$ so that they have the same dimension, then it is easy to see that $\|\tilde{\v}\|^2 = \|\v\|^2/\sin^2(\theta)$ and $\phi = 1/\sin^2(\theta)$. The condition $|\tilde{v}_i|^2 \geq |v_i|^2$ is also satisfied. But the above definition is stronger than this intuition.

\begin{lem}[Lemma 2.9 of \cite{chia2020sampling}]
\label{lem1}

Given $SQ_{\phi}(\v)$ and $\delta \in(0,1]$ we can sample from $\mathcal{D}_{\v}$ with success probability at least $1-\delta$ and cost $O(\phi \s\q_{\phi}(\v) \log(1/\delta))$. We can also estimate $\|\v\|$ up to relative error
$\epsilon$ in time $O(\epsilon^{-2}\phi \s\q_{\phi}(\v) \log(1/\delta))$.
\end{lem}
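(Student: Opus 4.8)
The plan is to prove Lemma~\ref{lem1} by reducing the oversampling access $SQ_\phi(\v)$ to honest sampling from $\mathcal{D}_{\v}$ via rejection sampling. Recall that $SQ_\phi(\v)$ gives us $Q(\v)$ together with $SQ(\tilde\v)$ for some $\tilde\v$ with $\|\tilde\v\|^2 = \phi\|\v\|^2$ and $|\tilde v_i|^2 \ge |v_i|^2$ for all $i$. First I would describe the rejection sampler: draw an index $i$ from $\mathcal{D}_{\tilde\v}$ (cost $\s(\tilde\v)$), query $v_i$ and $\tilde v_i$ (cost $O(\q(\tilde\v)+\q(\v))$), and accept $i$ with probability $r_i := |v_i|^2/|\tilde v_i|^2 \in [0,1]$. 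Conditioned on acceptance, the output index $i$ has probability proportional to $\mathcal{D}_{\tilde\v}(i)\, r_i = \frac{|\tilde v_i|^2}{\|\tilde\v\|^2}\cdot\frac{|v_i|^2}{|\tilde v_i|^2} = \frac{|v_i|^2}{\|\tilde\v\|^2}$, which after normalisation is exactly $\mathcal{D}_{\v}(i)$; so the sampler is exact when it succeeds.

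Next I would analyse the acceptance probability of one trial: it is $\sum_i \mathcal{D}_{\tilde\v}(i)\, r_i = \sum_i \frac{|v_i|^2}{\|\tilde\v\|^2} = \frac{\|\v\|^2}{\|\tilde\v\|^2} = 1/\phi$. Hence the number of trials until the first acceptance is geometric with mean $\phi$; to drive the overall failure probability below $\delta$ it suffices to run $O(\phi\log(1/\delta))$ independent trials and report failure only if all of them reject (the probability of that is $(1-1/\phi)^{O(\phi\log(1/\delta))} \le \delta$). Each trial costs $O(\s\q_\phi(\v))$ (one sample from $\tilde\v$, a couple of queries, and a coin flip), giving total cost $O(\phi\,\s\q_\phi(\v)\log(1/\delta))$, as claimed. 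For the norm-estimation statement, I would note that we do \emph{not} have $\n(\v)$ directly, but we can estimate $\|\v\|$ by Monte Carlo: draw $i\sim\mathcal{D}_{\tilde\v}$ and form the unbiased estimator $Z := |v_i|^2/|\tilde v_i|^2 \cdot \|\tilde\v\|^2$ (this uses $\n(\tilde\v)$ to know $\|\tilde\v\|$), whose expectation is $\sum_i \frac{|\tilde v_i|^2}{\|\tilde\v\|^2}\cdot\frac{|v_i|^2}{|\tilde v_i|^2}\|\tilde\v\|^2 = \|\v\|^2$ and which is bounded in $[0,\|\tilde\v\|^2]$ with $\|\tilde\v\|^2 = \phi\|\v\|^2 = \phi\,\E[Z]$. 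Averaging $O(\phi\epsilon^{-2}\log(1/\delta))$ such samples and taking a Chernoff/Chebyshev bound yields a relative-error-$\epsilon$ estimate of $\|\v\|^2$, hence of $\|\v\|$, with failure probability $\delta$; each sample again costs $O(\s\q_\phi(\v))$, matching the stated $O(\epsilon^{-2}\phi\,\s\q_\phi(\v)\log(1/\delta))$.

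The routine but slightly delicate step is the concentration bound for the norm estimator: because the variance of $Z$ can be as large as $\|\tilde\v\|^2\cdot\E[Z] = \phi\,\E[Z]^2$, a naive Chebyshev argument would need $O(\phi/\epsilon^2)$ samples per "block" and then median-of-means over $O(\log(1/\delta))$ blocks — I would spell out that this is exactly what produces the $\phi\epsilon^{-2}\log(1/\delta)$ factor, and note the alternative of viewing $Z/\|\tilde\v\|^2 \in [0,1]$ as a bounded random variable with mean $1/\phi$ so that a multiplicative Chernoff bound gives the same count directly. The main thing to be careful about is bookkeeping the distinction between relative error on $\|\v\|^2$ versus on $\|\v\|$ (a factor of $2$ in $\epsilon$), and confirming that all primitives invoked — sampling from $\tilde\v$, querying both $\v$ and $\tilde\v$, and querying $\|\tilde\v\|$ — are precisely the quantities bundled into $\s\q_\phi(\v)$ in Definition~\ref{defn:Oversampling and query access}. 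Since this lemma is quoted verbatim from \cite{chia2020sampling}, I expect the paper to simply cite it rather than reprove it, but the argument above is the proof one would give.
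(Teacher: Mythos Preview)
Your proposal is correct and is precisely the standard rejection-sampling argument behind this lemma. As you anticipated, the paper does not reprove it: the statement is quoted verbatim as Lemma~2.9 of \cite{chia2020sampling} and used as a black box, so there is no ``paper's own proof'' to compare against.
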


\begin{lem}[Lemma 2.10 of \cite{chia2020sampling}]
\label{lem2}

Given $SQ_{\varphi_i}(\v_i)$ and $\lambda_i \in \mathbb{C}$ for $i\in[k]$, denote $\v = \sum_i \lambda_i \v_i$, then we have
$SQ_{\phi}(\v)$ for
$\phi = k \sum_i \varphi_i \|\lambda_i \v_i\|^2/ \|\v\|^2$ and $\s\q_{\phi}(\v) = \max_i \s_{\varphi_i}(\v_i) + \sum_i \q(\v_i)$.

\end{lem}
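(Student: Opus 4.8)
The plan is to write down an explicit vector $\tilde{\v}$ witnessing $SQ_\phi(\v)$ and then check, one primitive at a time, that sampling from $\mathcal{D}_{\tilde{\v}}$, entry-query to $\tilde{\v}$, and norm-query for $\tilde{\v}$ can all be simulated from the given data $SQ_{\varphi_i}(\v_i)$. Let $\tilde{\v}_i$ be the oversampling vector underlying $SQ_{\varphi_i}(\v_i)$, so that $\|\tilde{\v}_i\|^2=\varphi_i\|\v_i\|^2$ and $|(\tilde{\v}_i)_j|^2\ge|(\v_i)_j|^2$ for every coordinate $j$. I would define $\tilde{\v}$ coordinatewise by
\[
|\tilde{v}_j|^2 \;:=\; k\sum_{i=1}^k |\lambda_i|^2\,|(\tilde{\v}_i)_j|^2 .
\]

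First I would verify the two oversampling conditions. Domination, $|\tilde{v}_j|^2\ge|v_j|^2$, is exactly Cauchy--Schwarz: $|v_j|^2=\bigl|\sum_i\lambda_i(\v_i)_j\bigr|^2\le k\sum_i|\lambda_i|^2|(\v_i)_j|^2\le k\sum_i|\lambda_i|^2|(\tilde{\v}_i)_j|^2=|\tilde{v}_j|^2$. Summing over $j$ gives $\|\tilde{\v}\|^2=k\sum_i|\lambda_i|^2\|\tilde{\v}_i\|^2=k\sum_i\varphi_i\|\lambda_i\v_i\|^2$, so $\phi=\|\tilde{\v}\|^2/\|\v\|^2$ has exactly the claimed value (assuming $\v\ne0$, else the statement is trivial). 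Next, $Q(\v)$ is immediate: $v_j=\sum_i\lambda_i(\v_i)_j$, computed by querying each $(\v_i)_j$ at total cost $\sum_i\q(\v_i)$. For $SQ(\tilde{\v})$, the norm $\|\tilde{\v}\|=(k\sum_i|\lambda_i|^2\|\tilde{\v}_i\|^2)^{1/2}$ and an entry $\tilde{v}_j=(k\sum_i|\lambda_i|^2|(\tilde{\v}_i)_j|^2)^{1/2}$ are computed directly from norm- and entry-queries to the $\tilde{\v}_i$. The crux is the sampler: $\mathcal{D}_{\tilde{\v}}$ is the mixture $\sum_i p_i\,\mathcal{D}_{\tilde{\v}_i}$ with weights $p_i=|\lambda_i|^2\|\tilde{\v}_i\|^2\big/\sum_{i'}|\lambda_{i'}|^2\|\tilde{\v}_{i'}\|^2$, so one draws $i\in[k]$ from $(p_i)$ --- a distribution on $[k]$ assembled from the $\n(\tilde{\v}_i)$ --- and then draws $j\sim\mathcal{D}_{\tilde{\v}_i}$ using the sampler for $\tilde{\v}_i$. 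Since the (potentially costly) sub-sampler is invoked only for the single index $i$ that was drawn, this contributes $\max_i\s_{\varphi_i}(\v_i)$, whereas the entry/norm queries touch all $k$ vectors and contribute the $\sum_i$ term; collecting $\s\q_\phi(\v)=\s(\tilde{\v})+\q(\tilde{\v})+\q(\v)+\n(\tilde{\v})$ then gives the stated bound.

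The genuinely delicate points are two. One is pinning the Cauchy--Schwarz constant to exactly $k$: a smaller constant destroys the domination $|\tilde{v}_j|^2\ge|v_j|^2$, while a larger one inflates $\phi$, so the value $k$ is forced. The other --- and the main place where care is needed --- is the cost accounting: recognising that the mixture decomposition lets a single invocation of one sub-sampler suffice (hence a $\max_i$ rather than a $\sum_i$ on the sampling cost), and checking that the various per-vector overheads (the $O(k)$-size norm and weight computations, the $\polylog(k)$-cost discrete draw from $(p_i)$, and the entry queries to the $\tilde{\v}_i$ themselves) can honestly be absorbed into the compact expression $\max_i\s_{\varphi_i}(\v_i)+\sum_i\q(\v_i)$ as stated.
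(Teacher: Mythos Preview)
The paper does not give its own proof of this lemma: it is quoted verbatim as Lemma~2.10 of \cite{chia2020sampling} and used as a black box. Your proposal is correct and is exactly the standard argument behind that cited result --- defining the oversampling vector $\tilde{\v}$ by $|\tilde{v}_j|^2=k\sum_i|\lambda_i|^2|(\tilde{\v}_i)_j|^2$, verifying domination via Cauchy--Schwarz, and sampling $\mathcal{D}_{\tilde{\v}}$ as the mixture $\sum_i p_i\,\mathcal{D}_{\tilde{\v}_i}$ --- so there is nothing to compare against.
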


The first lemma tells us how to construct the original sampling access from an over-sampled one.
By Definition \ref{defn:Oversampling and query access}, given $SQ_{\phi}(\v)$, we have $Q(\v)$. Lemma \ref{lem1} tells us how to obtain $S(\v)$ from $SQ_{\phi}(\v)$. So this lemma actually tells us how to obtain $SQ(\v)$ from $SQ_{\phi}(\v)$. 
The second lemma is about how to construct a linear combination of the data structures. Equivalently, it is about constructing the data structure for $M\c$ when we have the data structures of $M$ and $\c$, where $M$ is a matrix and $\c$ is a vector.

In our algorithms below, $\v_i$ will be a row of $A$, $k$ will refer to the sparsity of a certain known vector and $\v$ will be the approximate solution to the linear system. So in Lemma \ref{lem2}, $\s\q_{\phi}(\v) = O(k\s\q(A))$. Combing the above two lemmas, given $SQ(\v_i)$ and $\lambda_i$, the cost to sample and query $\v$ is $\widetilde{O}(k \phi\s\q(A))$, and the cost to estimate its norm is $\widetilde{O}(k \phi\s\q(A)/\epsilon^2)$. So a main difficulty in the the complexity analysis  is the estimation of $\phi$.


\section{The algorithm for solving linear systems}
\label{section:The algorithm for solving linear systems}

\subsection{Solving general linear systems}

In this section, based on the Kaczmarz method, we give a quantum-inspired classical algorithm for solving linear systems. The basic idea is similar to \cite{gilyen2020improved}. However, our result has a lower complexity and the analysis is much simpler.
The main idea is as follows: we first use the iteration (\ref{updating for sparse case}) to compute $\y_T$, then output $SQ(\x_T)$ with $\x_T=A^\dag \y_T$ by Lemmas \ref{lem1} and \ref{lem2}.

When $A$ is dense, the calculation of $\langle A_{r_k*}  | A^\dag| \y_{k} \rangle$ in (\ref{updating for sparse case}) is expensive. 
Note that $\langle A_{r_k*}  | A^\dag| \y_{k} \rangle = \sum_{j=1}^n A_{r_k,j} \langle A_{*j}|\y_k \rangle$. As an inner product, we can use the sampling idea (e.g. Monte Carlo) to approximate it.
Now let $j_1, \ldots, j_d$ be drawn randomly from $[n]$ under the distribution that ${\rm Pr}[j] = \|A_{*j}\|^2/\|A\|_F^2$. This distribution is designed to put more weight on the most important rows \cite{drineas2006fast,strohmer2009randomized}. 
Let $S = \{j_1, \ldots, j_d\}$ be the sequence of the chosen indices.
We modify the updating of $\y$ to:
\be \label{updating for y'}
\y_{k+1}  = \y_{k}  +  \frac{1 }{\|A_{r_k*}\|} \left( \tilde{b}_{r_k} - \frac{1}{d}\sum_{j\in S} \tilde{A}_{r_k,j} \langle A_{*j}|\y_k  \rangle \frac{\|A\|_F^2}{\|A_{*j}\|^2}
\right) \e_{r_k}.
\ee

For conciseness, define the matrix $D$ as
\be \label{sampling matrix}
D = 
\sum_{j\in S}
\frac{\|A\|_F^2}{d \|A_{*j}\|^2} \ket{j} \bra{j}.
\ee 
Then
\bea
\y_{k+1} &=& \y_k  + \frac{\tilde{b}_{r_k} - \langle \tilde{A}_{r_k*} | D A^\dag| \y_{k} \rangle }{\|A_{r_k*}\|}  \e_{r_k} \\
&=& \y_k  + \frac{\tilde{b}_{r_k} - \langle \tilde{A}_{r_k*} | A^\dag| \y_{k} \rangle }{\|A_{r_k*}\|}  \e_{r_k} + \frac{\langle \tilde{A}_{r_k*} |(I-D) A^\dag| \y_{k} \rangle }{\|A_{r_k*}\|}  \e_{r_k}.  
\label{updating for y'-matrix form}
\eea
Compared to the original iteration (\ref{updating for sparse case}), the procedure (\ref{updating for y'-matrix form}) contains a perturbation term.
To analyze its convergence rate, we need to determine $d$.
Before that, we first analyze the properties of the following random variable:
\be \label{notation of mu}
\mu_k := \langle \tilde{A}_{r_k*} |(I-D) A^\dag| \y_{k} \rangle = \langle\tilde{A}_{r_k*} | I - D |\x_{k} \rangle =
\langle\tilde{A}_{r_k*} | \x_{k} \rangle
-
\frac{1}{d}\sum_{j\in S} \tilde{A}_{r_k,j} x_{k,j} \frac{\|A\|_F^2}{\|A_{*j}\|^2}.
\ee

\begin{lem} 
\label{lem:mean and variance}
The mean and the variance of $\mu_k$ satisfy
$
\E_D[\mu_k] = 0,
{\rm Var}_D[\mu_k] \leq  \frac{1}{d}  \sum_{j=1}^n 
\tilde{A}_{r_k,j}^2 x_{k,j}^2 \frac{\|A\|_F^2}{\|A_{*j}\|^2}  .
$
\end{lem}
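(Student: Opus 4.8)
The plan is to compute the mean and variance of $\mu_k$ directly, treating the random index set $S = \{j_1,\dots,j_d\}$ (sampled i.i.d.\ with $\Pr[j] = \|A_{*j}\|^2/\|A\|_F^2$) as the only source of randomness, with $\x_k$, $r_k$ held fixed. First I would write $\mu_k = \sum_{\ell=1}^d Z_\ell$ where $Z_\ell = \frac{1}{d}\bigl( \langle \tilde A_{r_k*}|\x_k\rangle - \tilde A_{r_k,j_\ell}\, x_{k,j_\ell}\, \|A\|_F^2/\|A_{*j_\ell}\|^2\bigr)$, so that the $Z_\ell$ are i.i.d.\ across $\ell$. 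The key identity is that for a single draw $j$ from the stated distribution, $\E_j\bigl[ \tilde A_{r_k,j}\, x_{k,j}\, \|A\|_F^2/\|A_{*j}\|^2 \bigr] = \sum_{j\in[n]} \frac{\|A_{*j}\|^2}{\|A\|_F^2}\cdot \tilde A_{r_k,j}\, x_{k,j}\, \frac{\|A\|_F^2}{\|A_{*j}\|^2} = \sum_j \tilde A_{r_k,j} x_{k,j} = \langle \tilde A_{r_k*}|\x_k\rangle$ (using that $\tilde A_{r_k*}$ is real up to the appropriate conjugation — I would be careful that in the complex case the inner product $\langle \tilde A_{r_k*}|\x_k\rangle = \sum_j \overline{\tilde A_{r_k,j}} x_{k,j}$, and adjust the estimator/notation accordingly, matching the paper's convention in \eqref{notation of mu}). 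Hence $\E_D[Z_\ell]=0$ and $\E_D[\mu_k]=0$.

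For the variance, since the $Z_\ell$ are i.i.d.\ with mean zero, $\Var_D[\mu_k] = d\cdot \Var[Z_1] = \frac{1}{d}\Var\bigl[\tilde A_{r_k,j} x_{k,j}\|A\|_F^2/\|A_{*j}\|^2\bigr] \le \frac{1}{d}\,\E_j\bigl[|\tilde A_{r_k,j}|^2 |x_{k,j}|^2 \|A\|_F^4/\|A_{*j}\|^4\bigr]$, dropping the square of the mean. Expanding the expectation over $j$ gives $\frac{1}{d}\sum_{j\in[n]} \frac{\|A_{*j}\|^2}{\|A\|_F^2}\cdot |\tilde A_{r_k,j}|^2 |x_{k,j}|^2 \frac{\|A\|_F^4}{\|A_{*j}\|^4} = \frac{\|A\|_F^2}{d}\sum_{j\in[n]} \frac{|\tilde A_{r_k,j}|^2 |x_{k,j}|^2}{\|A_{*j}\|^2}$. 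Now I bound $\frac{1}{\|A_{*j}\|^2} \le \frac{1}{\min_{j'\in[n]}\|A_{*j'}\|^2}$ uniformly, pull it out, and use $|\tilde A_{r_k,j}| \le \|\tilde A_{r_k*}\| = 1$ (the rows of $\widetilde A$ are unit vectors) so that $\sum_j |\tilde A_{r_k,j}|^2 |x_{k,j}|^2 \le \sum_j |x_{k,j}|^2 = \|\x_k\|^2$. This yields exactly $\Var_D[\mu_k] \le \frac{\|A\|_F^2 \|\x_k\|^2}{d\,\min_{j\in[n]}\|A_{*j}\|^2}$.

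I don't anticipate a serious obstacle here — the computation is a textbook Monte Carlo mean/variance estimate in the spirit of \cite{drineas2006fast}. The only points requiring a little care are: (i) getting the complex-conjugation bookkeeping right in the inner product so the estimator is genuinely unbiased (it is, because the distribution on $j$ is designed to cancel the $\|A_{*j}\|^2$ weights); (ii) the bound $|\tilde A_{r_k,j}| \le 1$, which is where the normalization $\widetilde A_{i*} = A_{i*}/\|A_{i*}\|$ is used and is the reason the variance bound does not pick up a row norm factor; and (iii) making sure the crude step $\Var[X] \le \E[|X|^2]$ and the worst-case column-norm bound are acceptable — they are, since the subsequent convergence analysis only needs an upper bound on the variance, and $d$ will be chosen large enough to absorb the $\|A\|_F^2/\min_j\|A_{*j}\|^2$ factor.
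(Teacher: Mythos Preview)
Your proposal is correct and follows essentially the same route as the paper's proof: both exploit the i.i.d.\ structure of the $d$ samples to reduce to a single-term expectation/second moment, use the cancellation $\Pr[j]\cdot \|A\|_F^2/\|A_{*j}\|^2 = 1$ for unbiasedness, and then bound the variance via $\Var[X]\le \E[|X|^2]$, $1/\|A_{*j}\|^2 \le 1/\min_{j'}\|A_{*j'}\|^2$, and $|\tilde A_{r_k,j}|\le 1$. Your treatment is in fact slightly more careful about the complex case (writing $|\tilde A_{r_k,j}|^2$ rather than $\tilde A_{r_k,j}^2$), but otherwise the arguments coincide.
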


\begin{proof} The results follow from the standard calculation.
\beas
\E_D[\mu_k] &=& \langle\tilde{A}_{r_k*} | \x_{k} \rangle - \E_D\left[  \frac{1}{d} \sum_{j\in S} \tilde{A}_{r_k,j} x_{k,j} \frac{\|A\|_F^2}{\|A_{*j}\|^2}\right]  \\
&=&  \langle\tilde{A}_{r_k*} | \x_{k} \rangle - \E_D\left[  \tilde{A}_{r_k,j} x_{k,j} \frac{\|A\|_F^2}{\|A_{*j}\|^2} \right]  \\
&=&  \langle\tilde{A}_{r_k*} | \x_{k} \rangle - \sum_{j=1}^n \tilde{A}_{r_k,j} x_{k,j} \frac{\|A\|_F^2}{\|A_{*j}\|^2} \frac{\|A_{*j}\|^2}{\|A\|_F^2} \\
&=& 0. \\
{\rm Var}_D[\mu_k] &=&  {\rm Var}_D\left[\frac{1}{d}\sum_{j\in S} \tilde{A}_{r_k,j}  x_{k,j} \frac{\|A\|_F^2}{\|A_{*j}\|^2}\right] \\
&=& \sum_{j\in S} {\rm Var}_D\left[\frac{1}{d} \tilde{A}_{r_k,j}  x_{k,j} \frac{\|A\|_F^2}{\|A_{*j}\|^2}\right] \\
&\leq& \frac{1}{d} \E_D\left[\left( \tilde{A}_{r_k,j}  x_{k,j} \frac{\|A\|_F^2}{\|A_{*j}\|^2}\right)^2\right] \\
&=& \frac{1}{d}  \sum_{j=1}^n 
\left(\tilde{A}_{r_k,j}  x_{k,j} \frac{\|A\|_F^2}{\|A_{*j}\|^2} \right)^2
\frac{\|A_{*j}\|^2}{\|A\|_F^2}\\
&=& \frac{1}{d}  \sum_{j=1}^n 
\tilde{A}_{r_k,j}^2 x_{k,j}^2 \frac{\|A\|_F^2}{\|A_{*j}\|^2} .
\eeas
This completes the proof.
\end{proof}

\begin{rem}{\rm
In an early version, we further bounded ${\rm Var}_D[\mu_k]$ by $\frac{1}{d}  \frac{\|A\|_F^2}{\min_{j \in [n]} \|A_{*j}\|^2} \|\x_k\|^2$. This leads to $d=\widetilde{O}(\kappa_F^4/\epsilon^2)$. In comparison, the current bound leads to $d=\widetilde{O}(\kappa_F^2/\epsilon^2)$. So, the overall complexity of the algorithms presented below can be reduced by a factor of $\kappa_F^2$.
}\end{rem}

From (\ref{updating for y'-matrix form}), the updating of $\x_k  = A^\dag \y_k $ then becomes
\be
\x_{k+1}  = \x_k + (\tilde{b}_{r_k} - \langle \tilde{A}_{r_k*} |\x_{k} \rangle )  \tilde{A}_{r_k*}
+ \langle\tilde{A}_{r_k*} | I - D |\x_{k} \rangle \tilde{A}_{r_k*} . 
\label{randomized procedure}
\ee
Next, we prove a similar result to Lemma \ref{lem:converge rate}. The idea of our proof is similar to that of Lemma \ref{lem:converge rate} from \cite{strohmer2009randomized}, which nicely utilizes the geometrical structure of the Kaczmarz method.

\begin{prop} 
\label{prop:convergence}
Choose $d = 4 (\kappa_F^2/\epsilon^{2}) (\log 2/\epsilon^2)$, then after $T = \kappa_F^2 (\log 2/\epsilon^2)$ steps of iteration of the procedure (\ref{randomized procedure}), we have $\E[\|\x_{T}  - \x_*\|^2] \leq \epsilon^2 \|\x_*\|^2 $.
\end{prop}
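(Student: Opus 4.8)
The plan is to mimic the proof of Lemma~\ref{lem:converge rate} (Strohmer--Vershynin), tracking the extra perturbation term $\mu_k \widetilde{A}_{r_k*}$ that appears in the modified update~(\ref{randomized procedure}). Write $\x_{k+1}-\x_* = (\x_k-\x_*) + (\widetilde b_{r_k} - \langle \widetilde A_{r_k*}|\x_k\rangle)\widetilde A_{r_k*} + \mu_k \widetilde A_{r_k*}$. Using $\widetilde b_{r_k} = \langle \widetilde A_{r_k*}|\x_*\rangle$, the middle term equals $-\langle \widetilde A_{r_k*}|\x_k-\x_*\rangle \widetilde A_{r_k*}$, so the first two terms together give $(I - \proj{\widetilde A_{r_k*}})(\x_k-\x_*)$, which is an orthogonal projection — exactly as in the original Kaczmarz analysis. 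So I would set $\z_k := (I-\proj{\widetilde A_{r_k*}})(\x_k-\x_*)$, note $\z_k \perp \widetilde A_{r_k*}$, and write $\x_{k+1}-\x_* = \z_k + \mu_k \widetilde A_{r_k*}$, hence by orthogonality $\|\x_{k+1}-\x_*\|^2 = \|\z_k\|^2 + |\mu_k|^2$.

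Next I would take expectations in stages, conditioning on the history up to step $k$ (which fixes $\x_k$). First, condition additionally on the random row choice $r_k$: by Lemma~\ref{lem:mean and variance}, $\E_D[\mu_k] = 0$ and $\E_D[|\mu_k|^2] = \Var_D[\mu_k] \le \frac{\|A\|_F^2 \|\x_k\|^2}{d\, \min_j\|A_{*j}\|^2}$. Then average over $r_k$ chosen with probability $\|A_{r_k*}\|^2/\|A\|_F^2$: the Strohmer--Vershynin computation gives $\E_{r_k}[\|\z_k\|^2] = \|\x_k-\x_*\|^2 - \E_{r_k}[|\langle \widetilde A_{r_k*}|\x_k-\x_*\rangle|^2] \le (1-\kappa_F^{-2})\|\x_k-\x_*\|^2$, using $\E_{r_k}|\langle\widetilde A_{r_k*}|\w\rangle|^2 = \|A\w\|^2/\|A\|_F^2 \ge \|\w\|^2/(\|A^{-1}\|^2\|A\|_F^2) = \kappa_F^{-2}\|\w\|^2$ for $\w$ in the row space. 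Combining, with $V := \frac{\|A\|_F^2}{d\min_j\|A_{*j}\|^2}$,
\be
\E[\|\x_{k+1}-\x_*\|^2 \mid \x_k] \le (1-\kappa_F^{-2})\|\x_k-\x_*\|^2 + V\|\x_k\|^2.
\ee
Since $\x_0 = 0$, we have $\|\x_k\|^2 \le (\|\x_k-\x_*\| + \|\x_*\|)^2 \le 2\|\x_k-\x_*\|^2 + 2\|\x_*\|^2$; I would substitute this to get a clean recursion of the form $a_{k+1} \le (1-\kappa_F^{-2}+2V)a_k + 2V\|\x_*\|^2$ with $a_k := \E[\|\x_k-\x_*\|^2]$, $a_0 = \|\x_*\|^2$.

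Finally I would unroll this recursion. With the stated choice $d = \frac{4\|A\|_F^2\kappa_F^2(\log 2/\epsilon^2)}{\epsilon^2\min_j\|A_{*j}\|^2}$ we get $V = \frac{\epsilon^2}{4\kappa_F^2 T}$ where $T = \kappa_F^2\log(2/\epsilon^2)$, so $2V = \frac{\epsilon^2}{2\kappa_F^2 T} \le \frac{1}{2}\kappa_F^{-2}$ (for $\epsilon$ small, say $\epsilon^2 \le T$, which holds since $T\ge 1$), hence the contraction factor is at most $1 - \tfrac12\kappa_F^{-2} \le (1-\kappa_F^{-2})^{1/2}$; actually it is cleaner to keep $1-\kappa_F^{-2}+2V \le 1-\tfrac12\kappa_F^{-2}$ and note $(1-\tfrac12\kappa_F^{-2})^T \le e^{-T/(2\kappa_F^2)} = e^{-\frac12\log(2/\epsilon^2)} = \epsilon/\sqrt2$, so the homogeneous part contributes $\le (\epsilon^2/2)\|\x_*\|^2$. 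The inhomogeneous part is a geometric sum bounded by $2V\|\x_*\|^2 \cdot \frac{1}{\tfrac12\kappa_F^{-2}} = 4V\kappa_F^2\|\x_*\|^2 = \frac{\epsilon^2}{T}\|\x_*\|^2 \le \epsilon^2\|\x_*\|^2$. Adding the two pieces gives $\E[\|\x_T-\x_*\|^2] \le (\epsilon^2/2 + \epsilon^2/T)\|\x_*\|^2$; I may need to adjust the constants in $d$ and $T$ slightly (e.g. replace $\epsilon$ by $\epsilon/2$ somewhere) to land exactly at $\epsilon^2\|\x_*\|^2$, which is a routine matter. The main obstacle is the coupling introduced by the $V\|\x_k\|^2$ term: unlike the pure Kaczmarz recursion it is not a clean contraction, so one must control $\|\x_k\|$ in terms of $\|\x_k-\x_*\|$ and verify that the perturbation is small enough — relative to $\kappa_F^{-2}$ per step and summed over $T$ steps — that it only costs a constant factor in $d$. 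Everything else is the standard Strohmer--Vershynin projection argument plus Lemma~\ref{lem:mean and variance}.
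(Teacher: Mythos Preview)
Your proposal is correct and follows essentially the same route as the paper: the orthogonal decomposition $\x_{k+1}-\x_* = (I-\proj{\widetilde A_{r_k*}})(\x_k-\x_*) + \mu_k\widetilde A_{r_k*}$, the Strohmer--Vershynin contraction on the first piece, Lemma~\ref{lem:mean and variance} on the second, the bound $\|\x_k\|^2\le 2\|\x_k-\x_*\|^2+2\|\x_*\|^2$, and unrolling the resulting affine recursion are exactly what the paper does. One small slip: with your loose bound $1-\kappa_F^{-2}+2V\le 1-\tfrac12\kappa_F^{-2}$ the homogeneous part is $\epsilon/\sqrt2$, not $\epsilon^2/2$; the paper avoids this by keeping the sharper contraction factor $\rho=1-\kappa_F^{-2}+2V$ (where $2V\ll\kappa_F^{-2}$) so that $\rho^T\approx e^{-\log(2/\epsilon^2)}=\epsilon^2/2$, and bounds the geometric sum by $T$ rather than by $(1-\rho)^{-1}$ --- but as you note, this is routine constant-chasing.
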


\begin{proof}
For any vector $\v$ such that $A\v\neq 0$ we have $\|\v\|^2 \leq \|A^{+}\|^2 \|A\v\|^2$. This implies
\be \label{proof:eq1}
\frac{\|\v\|^2}{\kappa_F^2} =
 \frac{\|\v\|^2}{\|A\|_F^2 \|A^{+}\|^2} \leq \sum_{i=1}^m \langle \tilde{A}_{i*}| \v\rangle^2 \frac{\|A_{i*}\|^2}{\|A\|_F^2}
 =\E[\langle \tilde{A}_{i*}| \v\rangle^2].
\ee
Here the random variable is defined by
\[
{\rm Pr}[i] = \frac{\|A_{i*}\|^2}{\|A\|_F^2}.
\]

For simplicity, we set the initial vector $\x_0  = 0$.
In procedure (\ref{randomized procedure}), we denote
$\tilde{\x}_{k+1} = \x_k + (\tilde{b}_{r_k} - \langle \tilde{A}_{r_k*} |\x_{k} \rangle )  \tilde{A}_{r_k*}$, it is the orthogonal projection of $\x_k$ onto the hyperplane $\tilde{A}_{r_k*} \cdot \x = \tilde{b}_{r_k} $. Thus we can rewrite the iteration as
\[
\x_{k+1} = \tilde{\x}_{k+1} + \mu_k \tilde{A}_{r_k*}.
\]
The orthogonality implies (see Figure \ref{fig:Illustration of the proof})
\beas
\|\x_{k+1} - \x_*\|^2 
&=&  \|\tilde{\x}_{k+1} - \x_*\|^2 + \mu_k^2 \\
&=&  \|\x_{k} - \x_*\|^2 - \|\x_k - \tilde{\x}_{k+1}\|^2 + \mu_k^2 \\
&=&  \|\x_{k} - \x_*\|^2 - 
\langle \x_{k} - \x_*| \tilde{A}_{r_k*} \rangle^2
+ \mu_k^2 \\
&=&  \left(1 - 
\langle \frac{\x_{k} - \x_*}{\|\x_{k} - \x_*\|}| \tilde{A}_{r_k*} \rangle^2 \right) \|\x_{k} - \x_*\|^2 + \mu_k^2.
\eeas

\begin{figure}
\centering

\begin{tikzpicture}[>=stealth]
\draw[line width=0.3mm] (-2,-1) -- (-1,0)  node[circle,fill=black,minimum size=0.1pt,scale=0.5,label=right:{$\x_*$}]{} -- (2.3,3.3) node[label=right:{$\tilde{A}_{r_k*} \cdot \x = \tilde{b}_{r_k}$}]{}; 
\draw[line width=0.3mm,dashed] (0,3) node[circle,fill=black,minimum size=0.1pt,scale=0.5,label=above:{$\x_k$}]{} -- 
(1,2) node[circle,fill=black,minimum size=0.1pt,scale=0.5,label=left:{$\tilde{\x}_{k+1}$}]{}; 
\draw[line width=0.3mm,dashed] (1,2)  -- 
(4/3,5/3) node[circle,fill=black,minimum size=0.1pt,scale=0.5,label=right:{$\x_{k+1}$}]{}; 
\draw[line width=0.3mm,dashed] (-1,0)  --  (0,3); 
\draw[line width=0.3mm,dashed] (-1,0)  --  (4/3,5/3); 
\draw[->] (7/6+1/20,11/6+1/20) arc (130:70:60pt) node[minimum size=0.1pt,scale=0.5,label=right:{$\mu_k$}]{};
\end{tikzpicture}

\caption{Illustration of the orthogonality in the proof of Proposition \ref{prop:convergence}.}
\label{fig:Illustration of the proof}
\end{figure}

Taking the expectation yields
\bes
\E[\|\x_{k+1} - \x_*\|^2]
=
\left(1 - \E\left[
\langle \frac{\x_{k} - \x_*}{\|\x_{k} - \x_*\|}| \tilde{A}_{r_k*} \rangle^2 \right]\right) \E[\|\x_{k} - \x_*\|^2] + \E[\mu_k^2].
\ees
In the above, the expectation is taken first on $D$ by fixing $k, r_k$, then on $r_k$ by fixing $k$, and finally on $k$. Also if $A(\x_{k} - \x_*) = 0$, then $\x_k$ is already the optimal solution, so we assume that this does not happen before convergence.

By equation (\ref{proof:eq1}),
\[
1 - \E\left[
\langle \frac{\x_{k} - \x_*}{\|\x_{k} - \x_*\|}| \tilde{A}_{r_k*} \rangle^2 \right] \leq 1-\kappa_F^{-2}.
\]
By Lemma \ref{lem:mean and variance}, 
\beas
\E_{r_k} \Big[\E_D[\mu_k^2]\Big]
&\leq& \E_{r_k} \Bigg[\frac{1}{d}  \sum_{j=1}^n 
\tilde{A}_{r_k,j}^2 x_{k,j}^2 \frac{\|A\|_{\F}^2}{\|A_{*j}\|^2} \Bigg] \\
&=& \frac{1}{d} \sum_{r_k=1}^m \sum_{j=1}^n 
\tilde{A}_{r_k,j}^2 x_{k,j}^2 \frac{\|A\|_{\F}^2}{\|A_{*j}\|^2} \frac{\|A_{r_k*}\|^2}{\|A\|_{\F}^2} \\
&=& \frac{\|\x_k\|^2}{d}.
\eeas
Hence,
\beas
\E[\|\x_{k+1} - \x_*\|^2]
&\leq&
\left(1 - \kappa_F^{-2} \right) \E[\|\x_{k} - \x_*\|^2] + \frac{1}{d} \E[\|\x_k\|^2] \\
&\leq&
\left(1 - \kappa_F^{-2} \right) \E[\|\x_{k} - \x_*\|^2] + \frac{2}{d} (\|\x_*\|^2 + \E[\|\x_k-\x_*\|^2]) \\
&=&
(1 - \kappa_F^{-2} + \frac{2}{d} ) \E[\|\x_{k}  - \x_*\|^2] + \frac{2}{d} \|\x_*\|^2 .
\eeas
Now we choose $d =  \frac{4 \kappa_F^2 (\log 2/\epsilon^2)}{\epsilon^{2} }  $, then the above estimation leads to
\bes
\E[\|\x_{k+1} - \x_*\|^2]
\leq
\left(1 - \kappa_F^{-2} + \frac{1}{2}\kappa_F^{-2} \epsilon^2 (\log 2/\epsilon^2)^{-1} \right) \E[\|\x_{k} - \x_*\|^2] + \frac{\epsilon^2}{2\kappa_F^2 (\log 2/\epsilon^2)} \|\x_*\|^2 .
\ees
Setting $\rho := 1 - \kappa_F^{-2} + \frac{1}{2}\kappa_F^{-2}  \epsilon^2 (\log 2/\epsilon^2)^{-1}$ which is less than 1, we finally obtain
\beas
\E[\|\x_{T} - \x_*\|^2]
&\leq&
\rho^T \E[\|\x_{0} - \x_*\|^2] + \frac{\epsilon^2}{2\kappa_F^2 (\log 2/\epsilon^2)} \|\x_*\|^2 \sum_{i=0}^{T-1} \rho^i \\
&\leq&
\rho^T \|\x_*\|^2 + \frac{T\epsilon^2}{2\kappa_F^2 (\log 2/\epsilon^2)} \|\x_*\|^2.
\eeas
Let $T = \kappa_F^2 (\log 2/\epsilon^2)$, then
\beas
\rho^T &=&
\exp( \kappa_F^2 (\log 2/\epsilon^2) \log 
(1 - \kappa_F^{-2} + \kappa_F^{-2}  \epsilon^2 (\log 1/\epsilon)^{-1}) ) \\
&\approx&
\exp( - (\log 2/\epsilon^2) 
(1-\epsilon^2 (\log 1/\epsilon)^{-1})) ) \\
&\approx& \epsilon^2/2.
\eeas
Therefore, $\E[\|\x_{T} - \x_*\|^2] \leq \epsilon^2 \|\x_*\|^2$ as claimed.
\end{proof}

\begin{thm}
\label{thm:non sparse case}
Assume that $\x_* = A^+\b$.
Given $SQ(A), Q(\b)$, there is an algorithm that returns $SQ(\x)$ such that $\|\x - \x_*\| \leq \epsilon \|\x_*\|$ with probability at least 0.99 and 
\be
\s\q(\x) 
=\widetilde{O}( (\frac{\kappa_F^6}{\epsilon^2} + \alpha \kappa_F^4\kappa^4) (\s\q(A) + \q(\b))),
\ee
where $\alpha = \eta/(1-\eta^2)$ and $\eta=\|A\x_*-\b\|/\|\b\|$.
\end{thm}

\begin{proof}
Let $d = \widetilde{O}(\kappa_F^2/\epsilon^2) $. 
By the updating formula (\ref{updating for y'}), we know that $\y_k $ is at most $k$-sparse. So in the $k$-th step of the iteration, the calculation of the summation in (\ref{updating for y'}) costs $O(kd)$. At each step of the iteration, we sample a row of $A$ and query the corresponding entry of $\b$. So the overall cost of $k$-th step is $O(kd(\s\q(A) + \q(\b)) )$. After $T=O(\kappa_F^2\log(1/\epsilon))$ steps, it converges and the cost is $O(T^2d(\s\q(A) + \q(\b)) )=\widetilde{O}((\kappa_F^6/\varepsilon^2 )(\s\q(A) + \q(\b)) )$.

Note that $\x_T =A^\dag \y_T$ and $\y_T$ is $T$-sparse. By Lemma \ref{lem2} (taking $\v_i$ to be rows of $A$), we can obtain $SQ_{\phi}(\x_T)$ in time $O(T\s\q(A))$, where
\bes
\phi = T \frac{\sum_{i} y_{T,i}^2 \|A_{i*}\|^2}{\|\x_T\|^2}.
\ees
In Appendix \ref{app for non-sparse matrices}, we shall show that $\phi = O(\kappa_F^4 + \alpha \kappa_F^2 \kappa^4)$.
By Lemma \ref{lem1}, we can obtain $SQ(\x_T)$ in time
$
O(\phi T\s\q(A)) = O((\kappa_F^6 + \alpha \kappa_F^4 \kappa^4)\s\q(A)).
$
\end{proof}

In the above, the parameter $\eta$ describes the overlap of $\b$ in the column space of $A$. If $\b$ almost lies in the column space of $A$, then $\eta$ is close to 0. In this case, the complexity is basically dominated by the first term.

We next use the Kaczmarz method with averaging (\ref{Kaczmarz method in average}) to reduce the dependence on $\kappa_F$ in the above theorem. 
Similar to (\ref{randomized procedure}), we change the updating rule into
\be \label{kaczmarz average:dense}
\x_{k+1} = \x_k + \frac{1}{2}\sum_{i\in\T_k} (\tilde{b}_{i} - \langle \tilde{A}_{i*} |\x_{k} \rangle )  \tilde{A}_{i*}
+ \frac{1}{2} \sum_{i\in\T_k} \langle\tilde{A}_{i*} | I - D_i |\x_{k} \rangle \tilde{A}_{i*}, 
\ee
The definition of $D_i$ is similar to (\ref{sampling matrix}), which depends on $i$ now.
In Appendix \ref{app for the convergence rate}, we shall prove that if
$d = \widetilde{O}(\kappa_F^2/\epsilon^2)$,
then after $T = O(\kappa^2 \log(1/\epsilon))$ steps of iteration, we have $\E[\|\x_T-\x_*\|^2]\leq \epsilon^2 \|\x_0 - \x_*\|^2$. Together with the estimation of $\phi$ in Appendix \ref{app:Estimation of phi for Kaczmarz with averaging}, we have the following improved result.

\begin{thm}
\label{thm:improvements}
Assume that $\x_* = A^+\b$.
Given $SQ(A), Q(\b)$, there is an algorithm that returns $SQ(\x)$ such that $\|\x - \x_*\| \leq \epsilon \|\x_*\|$ with probability at least 0.99 and
\be
\s\q(\x) = \widetilde{O}\left( (\frac{\kappa_F^4\kappa^2}{\epsilon^2} + \alpha \kappa_F^2 \kappa^6 ) (\s\q(A) + \q(\b))  \right),
\ee
where $\alpha = \eta/(1-\eta^2)$ and $\eta=\|A\x_*-\b\|/\|\b\|$.
\end{thm}

\begin{proof}
From the updating formula (\ref{kaczmarz average:dense}), the updating of $\y_k$ is
\[
\y_{k+1} = \y_k + \frac{1}{2} \sum_{i\in\T_k} \frac{\tilde{b}_{i} - \langle \tilde{A}_{i*} |D_iA^\dag |\y_{k} \rangle }{\|A_{i*}\|}  \e_i.
\]
So $\y_k$ is at most $(k\kappa_F^2/\kappa^2)$-sparse. Thus the cost in step $k$ is $O(kd\kappa_F^2(\s\q(A) + \q(\b)) /\kappa^2)=\widetilde{O}(k\kappa_F^4(\s\q(A) + \q(\b)) /\kappa^2\epsilon^2)$. After convergence, the total cost is
$\widetilde{O}(T^2\kappa_F^4(\s\q(A) + \q(\b)) /\kappa^2\epsilon^2) =\widetilde{O}(\kappa_F^4\kappa^2(\s\q(A) + \q(\b)) /\epsilon^2)$.
By the estimation in Appendix \ref{app:Estimation of phi for Kaczmarz with averaging},  we have $\phi = O(\kappa_F^2 \kappa^2 + \alpha \kappa^6)$, so the cost to obtain $SQ(\x_T)$ is $O(\phi T\kappa_F^2\s\q(A)/\kappa^2) = \widetilde{O}((\kappa_F^4\kappa^2+\alpha \kappa_F^2\kappa^6)\s\q(A))$.
\end{proof}

\subsection{Solving symmetric positive definite linear systems}

Now suppose $A$ is a symmetric positive definite matrix. Then a simple approach to solve $A\x=\b$ is the randomized coordinate descent iteration~\cite{gower2015randomized} 
\be \label{randomized coordinate descent}
\x_{k+1} = \x_k - \frac{\langle A_{r_k*}| \x_k\rangle - b_{r_k}}{A_{r_k,r_k}} \e_{r_k},
\ee
where the probability of choosing $r_k$ is $A_{r_k,r_k}/{\rm Tr}(A)$.
If we apply the above iteration to the normal equation $A^\dag A \x = A^\dag \b$, we will obtain the Kaczmarz method.
As for the iteration (\ref{randomized coordinate descent}), it is not hard to show that 
\[
\E[\|\x_k-\x_*\|^2] \leq \left(1 - \frac{1}{\|A^{+}\| {\rm Tr(A)}} \right)^k \|\x_0-\x_*\|^2.
\]
In fact, assume $A\x_*=\b$, then (\ref{randomized coordinate descent}) implies that
\[
\x_{k+1} - \x_* = \left(I - \frac{\ket{r_k}\bra{r_k} A}{A_{r_k,r_k}} \right)(\x_{k} - \x_*).
\]
Thus
\beas
\E[\|\x_{k+1} - \x_*\|^2] &\leq& \left\| \E \left[I - \frac{\ket{r_k}\bra{r_k} A}{A_{r_k,r_k}} \right] \right\| \E[\|\x_{k} - \x_*\|^2] \\
&=& \left\|I - \frac{A}{{\rm Tr}(A)} \right\|  \E[\|\x_{k} - \x_*\|^2]   \\
&=& \left(1 - \frac{1}{\|A^{+}\| {\rm Tr(A)}} \right)\E[\|\x_{k} - \x_*\|^2].
\eeas

Similar to the idea introduced above to handle Kaczmarz method, we can use the randomized coordinate descent method to solve dense linear systems $A\x=\b$ such that $A$ is symmetric positive definite. The only difference is that $\|A\|_F^2$ becomes ${\rm Tr}(A)$, and $\|A^{+}\|^2$ becomes $\|A^{+}\|$. Moreover, the randomized coordinate descent (\ref{randomized coordinate descent}) can also be parallelized into
\be
\x_{k+1} = \x_k - \frac{\alpha}{2q} \sum_{i\in \T_k} \frac{\langle A_{i*}| \x_k\rangle - b_{i}}{A_{i,i}} \e_{i}.
\ee
If we take $\alpha=q={\rm Tr}(A)/\|A\|$, then it converges after $O(\kappa \log(1/\epsilon))$ iterations. The proof is the same as that of (\ref{Kaczmarz method in average}). Therefore, by a similar argument to the proof of Theorem \ref{thm:improvements}, we have

\begin{thm}
\label{thm:spd}
Let $A$ be symmetric positive definite.
Assume that $\x_* = \arg\min \|A\x - \b\|$.
Given $SQ(A), Q(\b)$, there is an algorithm that returns $SQ(\x)$ such that $\|\x - \x_*\| \leq \epsilon \|\x_*\|$ with probability at least 0.99 and
\be
\s\q(\x) = \widetilde{O}\left(\frac{{\rm Tr}(A)^2\|A^{+}\|^2\kappa (\s\q(A)+\q(\b))}{\epsilon^2} \right).
\ee
\end{thm}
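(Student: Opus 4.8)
The plan is to re-run the argument behind Theorem~\ref{thm:improvements}, replacing the Kaczmarz-with-averaging iteration by the parallelized version of the randomized coordinate descent iteration~(\ref{randomized coordinate descent}) (with $\alpha = q = {\rm Tr}(A)/\|A\|$), and carrying out throughout the substitutions $\|A\|_F^2 \mapsto {\rm Tr}(A)$, $\|A^{-1}\|^2 \mapsto \|A^{-1}\|$ and $\|A\|^2 \mapsto \|A\|$. For the exact iteration the relevant facts are already recorded before the theorem: this parallelized iteration contracts by the factor $\|I - A/{\rm Tr}(A)\| = 1 - ({\rm Tr}(A)\|A^{-1}\|)^{-1}$, so $T = O(\kappa\log(1/\epsilon))$ iterations suffice to reach $\E[\|\x_T - \x_*\|^2] \le \epsilon^2\|\x_*\|^2$. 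The exponent of $\kappa$ here is one smaller than for Kaczmarz-with-averaging, which is what produces a single factor of $\kappa$ in the claimed bound rather than $\kappa^2$.

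Since $A$ is dense, I would replace each exact inner product $\langle A_{i*}|\x_k\rangle$ by a Monte Carlo estimate, drawing $d$ indices $j$ with probability $A_{jj}/{\rm Tr}(A)$ and using $\tfrac1d\sum_{j\in S}\tfrac{{\rm Tr}(A)}{A_{jj}}A_{ij}x_{k,j}$, exactly mirroring~(\ref{updating for y'})--(\ref{kaczmarz average:dense}). The analogue of Lemma~\ref{lem:mean and variance} then says that the per-step perturbation $\nu_k$ has mean zero and, using the inequality $A_{ij}^2 \le A_{ii}A_{jj}$ (valid since $A$ is SPD) together with $n \le {\rm Tr}(A)\|A^{-1}\|$, variance $O\big({\rm Tr}(A)\|A^{-1}\|\,\|\x_k\|^2/d\big)$; this is exactly what the substitution $\kappa_F^2 \mapsto {\rm Tr}(A)\|A^{-1}\|$ predicts for the variance bound of Lemma~\ref{lem:mean and variance}. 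Feeding it into the convergence argument of Proposition~\ref{prop:convergence} (more precisely its averaged version, Appendix~\ref{app for the convergence rate}) forces $d = \widetilde O\big(({\rm Tr}(A)\|A^{-1}\|)^2/\epsilon^2\big)$ in order to keep $\E[\|\x_T - \x_*\|^2] \le \epsilon^2\|\x_*\|^2$.

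The cost accounting is then identical to the proof of Theorem~\ref{thm:improvements} with the above substitutions, so I will only record the outcome: summing the per-iteration cost over the $T = O(\kappa\log(1/\epsilon))$ iterations, with batch size $q = {\rm Tr}(A)/\|A\|$ and sample count $d = \widetilde O(({\rm Tr}(A)\|A^{-1}\|)^2/\epsilon^2)$, and simplifying via $\kappa^2/\|A\| = \kappa\|A^{-1}\|$, gives a total of $\widetilde O\big({\rm Tr}(A)^3\|A^{-1}\|^3\kappa(\s\q(A)+\q(\b))/\epsilon^2\big)$. The cost of converting the sparse description of $\x_T$ into $SQ(\x_T)$ -- via Lemmas~\ref{lem1}--\ref{lem2}, bounding the oversampling parameter $\phi$ as in Appendix~\ref{app:Estimation of phi for Kaczmarz with averaging}, or (since the coordinate-descent iterate is itself a sparse combination of standard basis vectors) essentially for free -- is a lower-order term and does not affect the stated bound.

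The step that needs genuine new work is the convergence of the inexact, batched iteration. The proof of Proposition~\ref{prop:convergence} used that the Kaczmarz update is an \emph{orthogonal} projection onto a hyperplane, giving the Pythagorean split $\|\x_{k+1}-\x_*\|^2 = \|\tilde\x_{k+1}-\x_*\|^2 + \mu_k^2$; the coordinate-descent update $\x_{k+1}-\x_* = (I - A_{r_k,r_k}^{-1}\ket{r_k}\bra{r_k}A)(\x_k - \x_*)$ is an \emph{oblique} projection, so that identity is unavailable. Instead I would run the operator-norm argument sketched just before the theorem -- using $\|\E[I - A_{r_k,r_k}^{-1}\ket{r_k}\bra{r_k}A]\| = \|I - A/{\rm Tr}(A)\|$ -- while carefully controlling the cross term between the contraction and the mean-zero perturbation $\nu_k$, then lifting to the $q$-batched iteration as in Appendix~\ref{app for the convergence rate}. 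A secondary technical point is that $SQ(A)$ provides sampling proportional to squared row norms, not to the diagonal $A_{ii}$ that~(\ref{randomized coordinate descent}) samples from; this is recovered from $SQ(A)$ by rejection sampling (or by a single preprocessing pass over the diagonal), with the overhead absorbed into the $\widetilde O$.
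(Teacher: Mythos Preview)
Your proposal is correct and follows precisely the route the paper takes: the paper's own proof is just the sentence ``by a similar argument to the proof of Theorem~\ref{thm:improvements}'' after recording the substitutions $\|A\|_F^2\to{\rm Tr}(A)$, $\|A^{-1}\|^2\to\|A^{-1}\|$, $\|A\|^2\to\|A\|$ and the batched coordinate-descent rate $T=O(\kappa\log(1/\epsilon))$. You actually supply more detail than the paper does --- the variance bound via $A_{ij}^2\le A_{ii}A_{jj}$, the cost arithmetic, and the two places where the analogy is not literal (the oblique-projection step, and sampling proportional to $A_{ii}$ rather than $\|A_{i*}\|^2$) --- none of which the paper spells out.
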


The above result is actually more general than Theorem \ref{thm:improvements}. If we consider the normal equation $A^\dag A \x = A^\dag \b$, Theorem \ref{thm:spd} implies Theorem \ref{thm:improvements}.

\section{Discussion}
\label{section:Discussion for solving row sparse linear systems}

\subsection{Solving row sparse linear systems}

When $A$ is row sparse, the inner product in the Kaczmarz method (\ref{Randomized Kaczmarz algorithm}) is not expensive to calculate. As a result, we can directly apply the Kaczmarz method to solve the linear system $A\x=\b$. The result is summarized below, and the proof is straightforward. 

\begin{thm}
\label{thm:sparse case}
Assume that $A$ has row sparsity $s$. Let $\x_* = \arg\min \|A\x - \b\|$.
Given $SQ(A), SQ(\b)$, then there is an algorithm that returns an $O(s\kappa_F^2 \log(1/\epsilon))$-sparse vector $\x$ such that $\|\x - \x_*\| \leq \epsilon \|\x_*\|$ with probability at least 0.99 in time
\be
O(s\kappa_F^2 \log (1/\epsilon) ).
\ee
\end{thm}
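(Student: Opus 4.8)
The plan is to run the randomized Kaczmarz iteration (\ref{Randomized Kaczmarz algorithm}) directly, with $\x_0=0$, for a suitable number $T$ of steps, and output $\x_T$ as an explicit sparse vector; the whole argument then rests on Lemma \ref{lem:converge rate} together with a per-step cost bound. First I would record the structural fact that, since $\x_0=0$ and each update in (\ref{Randomized Kaczmarz algorithm}) adds a multiple of a single row $A_{r_k*}$, the iterate $\x_k$ is a linear combination of at most $k$ rows of $A$, hence at most $sk$-sparse when $A$ has row sparsity $s$. Accordingly I would maintain $\x_k$ throughout as a sparse list (hash table) keyed by its nonzero coordinates, so that both lookups of individual entries and insertions cost $\widetilde O(1)$.

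Next I would bound the cost of a single step of (\ref{Randomized Kaczmarz algorithm}). Sampling $r_k\in[m]$ with probability $\|A_{r_k*}\|^2/\|A\|_F^2$ is exactly what $SQ(A)$ provides (through $SQ(\a)$ for the row-norm vector $\a$), and querying $b_{r_k}$ and $\|A_{r_k*}\|$ is provided by $SQ(\b)$ and $SQ(A)$; all of this costs $\widetilde O(\s\q(A)+\s\q(\b))=\widetilde O(1)$ under the data-structure assumptions, so these factors are suppressed exactly as in the other classical rows of Table \ref{tab:comparison}. The inner product $\langle A_{r_k*}|\x_k\rangle=\sum_{j\in\supp(A_{r_k*})}\overline{A_{r_k,j}}\,x_{k,j}$ is a sum of at most $s$ terms, each obtained by one query to $A_{r_k*}$ and one lookup of $x_{k,j}$ in the stored sparse representation of $\x_k$, so it takes $\widetilde O(s)$ time; forming $\x_{k+1}$ from $\x_k$ then modifies at most $s$ entries, again $\widetilde O(s)$. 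Hence each iteration costs $\widetilde O(s)$ and $T$ iterations cost $\widetilde O(sT)$.

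It remains to fix $T$. By Lemma \ref{lem:converge rate} with $\x_0=0$ we have $\E[\|\x_T-\x_*\|^2]\le(1-\kappa_F^{-2})^T\|\x_*\|^2\le e^{-T/\kappa_F^2}\|\x_*\|^2$, so taking $T=\lceil \kappa_F^2\log(100/\epsilon^2)\rceil=O(\kappa_F^2\log(1/\epsilon))$ gives $\E[\|\x_T-\x_*\|^2]\le\tfrac{1}{100}\epsilon^2\|\x_*\|^2$; Markov's inequality then yields $\|\x_T-\x_*\|\le\epsilon\|\x_*\|$ with probability at least $0.99$. With this choice of $T$ the output $\x_T$ is $O(s\kappa_F^2\log(1/\epsilon))$-sparse and is produced in time $O(s\kappa_F^2\log(1/\epsilon))$ up to the suppressed logarithmic and $\s\q$ factors, which is the claim. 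I would also note in passing that from such an explicit sparse vector one trivially has $SQ(\x_T)$, so this output is at least as strong as that of Theorems \ref{thm:non sparse case}--\ref{thm:spd}.

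I expect the only point genuinely requiring care to be the per-step cost accounting: one must be sure that the stored data structure for a row-sparse $A$ permits enumerating the support of a row in $O(s)$ time (not merely sampling or querying its entries), and that lookups and updates on the growing sparse representation of $\x_k$ are $\widetilde O(1)$; granting these, everything else is an immediate consequence of the already-established convergence bound of Lemma \ref{lem:converge rate} and Markov's inequality.
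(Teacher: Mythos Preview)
Your proposal is correct and follows essentially the same approach as the paper: run the randomized Kaczmarz iteration (\ref{Randomized Kaczmarz algorithm}) from $\x_0=0$, observe that each step costs $O(s)$ since the inner product $\langle A_{r_k*}|\x_k\rangle$ involves only the $s$ nonzero entries of $A_{r_k*}$, bound the sparsity of $\x_k$ by $sk$, and use Lemma~\ref{lem:converge rate} together with Markov's inequality to fix $T=O(\kappa_F^2\log(1/\epsilon))$. Your write-up is in fact more careful than the paper's short proof in spelling out the sampling step via $SQ(A)$, the sparse data structure for $\x_k$, and the need to enumerate the support of a row; these are implementation details the paper leaves implicit but your treatment of them is sound.
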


\begin{proof}
In the Kaczmarz iteration (\ref{Randomized Kaczmarz algorithm}), since $A$ has row sparsity $s$, it follows that the calculation of the inner product $\langle A_{r_k*}|\x_k\rangle$ costs $O(s)$ operations. This is also the cost of the $k$-th step of iteration. After convergence, the total cost is $O(Ts)$, where $T=O(\kappa_F^2 \log(1/\epsilon))$. 
As for the sparsity of $\x_k$, we can choose the initial approximation $\x_0=0$, then $\x_k$ is at most $sk$-sparse.
\end{proof}

Similarly, based on the randomized coordinate descent method (\ref{randomized coordinate descent}), we have

\begin{thm}
\label{thm:sparse case2}
Suppose that $A$ is symmetric positive definite and $s$-sparse. Let $\x_* = \arg\min \|A\x - \b\|$.
Given $SQ(A), SQ(\b)$, then there is an algorithm that returns an $O({\rm Tr}(A) \|A^{+}\| \log (1/\epsilon) )$-sparse vector $\x$ such that $\|\x - \x_*\| \leq \epsilon \|\x_*\|$ with probability at least 0.99 in time
\be
O(s{\rm Tr}(A) \|A^{+}\| \log (1/\epsilon) ).
\ee
\end{thm}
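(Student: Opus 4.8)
The plan is to mirror the proof of Theorem \ref{thm:sparse case} almost verbatim, replacing the randomized Kaczmarz iteration by the randomized coordinate descent iteration (\ref{randomized coordinate descent}). First I would recall the convergence estimate derived in the text just above, namely $\E[\|\x_k - \x_*\|^2] \le (1 - 1/(\|A^{-1}\|\tr(A)))^k \|\x_0 - \x_*\|^2$, which holds because $A$ is symmetric positive definite so that $A\x_* = \b$ and each step is the orthogonal-type contraction $\x_{k+1}-\x_* = (I - \ket{r_k}\bra{r_k}A/A_{r_k,r_k})(\x_k-\x_*)$. Setting $\x_0 = 0$ and applying Markov's inequality (exactly as after Lemma \ref{lem:converge rate}), one sees that $T = O(\tr(A)\|A^{-1}\|\log(1/\epsilon))$ iterations suffice to guarantee $\|\x_T - \x_*\| \le \epsilon\|\x_*\|$ with probability at least $0.99$.

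Next I would bound the per-iteration cost. In (\ref{randomized coordinate descent}) the only nontrivial quantity is the inner product $\langle A_{r_k*}|\x_k\rangle$; since each row of $A$ has at most $s$ nonzero entries, and querying entries of $A$ and sampling the row index $r_k$ with probability $A_{r_k,r_k}/\tr(A)$ are available from $SQ(A)$ (the diagonal entries and their sum $\tr(A)$ are obtainable, in particular $SQ$ of the vector of row norms gives what is needed after restricting to diagonal entries — here one should note $A$ SPD means $A_{i,i} = \|A^{1/2}\e_i\|^2$, but more simply the diagonal entries are directly queryable and $\tr(A)$ can be precomputed or is stored), the cost of step $k$ is $O(s)$. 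Hence the total running time over $T$ iterations is $O(sT) = O(s\,\tr(A)\|A^{-1}\|\log(1/\epsilon))$.

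Finally I would track the sparsity of the output. Starting from $\x_0 = 0$, each iteration of (\ref{randomized coordinate descent}) adds a multiple of a single basis vector $\e_{r_k}$, so $\x_k$ is at most $k$-sparse; after $T$ steps the returned vector $\x = \x_T$ is $O(T)$-sparse $= O(\tr(A)\|A^{-1}\|\log(1/\epsilon))$-sparse, and we simply output it as an explicit sparse vector. This is exactly the claimed sparsity and running time.

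The routine obstacle, if any, is the bookkeeping about which pieces of $SQ(A)$ are actually used: here we need only query access to entries of $A$ (to read off the $s$ nonzero entries in a row and the diagonal entry $A_{r_k,r_k}$) and sampling access to the distribution proportional to the diagonal entries, together with knowledge of $\tr(A)$; all of these are subsumed by $SQ(A)$, and $SQ(\b)$ supplies the entry $b_{r_k}$. So the proof is genuinely a one-to-one translation of the Kaczmarz argument, with $\|A\|_F^2 \rightsquigarrow \tr(A)$ and $\|A^{-1}\|^2 \rightsquigarrow \|A^{-1}\|$ as already flagged in the surrounding discussion; no new idea is required.
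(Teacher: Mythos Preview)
Your proposal is correct and is exactly the approach the paper takes: the paper gives no separate proof for Theorem~\ref{thm:sparse case2}, merely writing ``Similarly, based on the randomized coordinate descent method~(\ref{randomized coordinate descent}), we have\ldots'', so the intended argument is precisely the line-by-line translation of the proof of Theorem~\ref{thm:sparse case} that you outline, with $\kappa_F^2 \rightsquigarrow \tr(A)\|A^{-1}\|$ and the per-step sparsity increment dropping from $s$ to $1$ (since (\ref{randomized coordinate descent}) updates a single coordinate). Your only loose end---sampling $r_k$ with probability $A_{r_k,r_k}/\tr(A)$ rather than $\|A_{r_k*}\|^2/\|A\|_F^2$---is not addressed in the paper either and is implicitly assumed to be available from the data structure.
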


Differently from  quantum and  quantum-inspired algorithms, the outputs of the above two randomized algorithms are vectors. 
It is known that if $A$ has row and column sparsity $s$ and given in the sparse access input model, the currently best known quantum algorithm for matrix inversion costs $O(s\|A\|_{\max} \|A^{+}\| \log^2(1/\epsilon))$, where $\|A\|_{\max}$ refers to the maximal entry in the sense of absolute value \cite{chakraborty_et_al:LIPIcs:2019:10609}. 
The comparison between this quantum algorithm and the randomized algorithm stated in Theorem \ref{thm:sparse case} may be not fair enough because they are building on different models -- the quantum algorithm uses the sparse access model, while the later one assumes access to $SQ(A)$. However, if we only concentrate on the final complexity, the quantum linear solver achieves large speedups if $\|A\|_F \gg \|A\|_{\max}$. This can often happen, especially when the rank of $A$ is large (e.g.\ $A$ is a sparse unitary).
However, when the rank of $A$ is polylog in the dimension, then $\|A\|_F=\widetilde{O}(s\|A\|_{\max})$ if $A$ is row and column sparse.  In this case, the quantum computer only achieves a quadratic speedup.
For a row sparse but column dense matrix $A$, if it is stored in the QRAM data structure, the complexity to solve $A\x=\b$ using a quantum computer is $O(\kappa_F\log^2(1/\epsilon))$~\cite{chakraborty_et_al:LIPIcs:2019:10609}. This only gives a quadratic speedup in terms of $\kappa_F$ over Theorem \ref{thm:sparse case}.

By Theorem \ref{thm:sparse case2}, when $A$ is symmetric positive definite, the randomized coordinate descent method finds an approximate solution in time $O({\rm Tr}(A) \|A^{+}\| \log(1/\epsilon))$. For simplicity, let us consider the special case that $A$ is a sparse density operator, then ${\rm Tr(A)} = 1$. Under certain conditions \cite{gilyen2019quantum}, we can efficiently find a unitary such that $A$ is a sub-block. This means that the complexity of the quantum linear solver for this particular linear system is $O(\|A\|\|A^{+}\|\log^2(1/\epsilon))$, while the algorithm based on Theorem \ref{thm:sparse case2} costs $O(\|A^{+}\| \log(1/\epsilon))$. So if $\|A\|=\Theta(1)$, it is possible that there is no quantum speedup based on the current quantum linear solvers.

In conclusion, quantum linear solvers can build on many different models, e.g. sparse-access input model, QRAM model, etc. Whenever we have an efficient block-encoding, quantum computers can solve the corresponding linear system efficiently. In comparison,  so far there is only one model for quantum-inspired algorithms. The above two theorems indicate that to explore large quantum speedups for sparse problems, it is better to focus on other models rather than the QRAM model. This also leads to the problem of exploring other models for quantum-inspired algorithms.

\subsection{Reducing the dependence on $\|A\|_F$}
\label{sec:reduceaf}

The Frobenius norm can be large for high-rank matrices. For example, when we use the discretization method (e.g.\ Euler method) to solve a linear system of ordinary differential equations, the obtained linear system contains the identity as a sub-matrix. So the Frobenius norm can be as large as the dimension of the problem. The performance of quantum-inspired algorithms is highly affected by the Frobenius norm, so currently they are mainly effective for low-rank systems. To enlarge the application scope, it is necessary to consider the problem of how to reduce or even remove the the dependence on the Frobenius norm in the complexity.

To the best of our knowledge, there are no generalizations of the Kaczmarz method such that the total cost is independent of $\|A\|_F$. However, it can be generalized such that $\|\E[\x_k-\x_*]\| \leq (1-\kappa^{-r})^k \|\x_0-\x_*\|$ and each step of iteration is not expensive, where $r\in\{1,2\}$. For example, a simple generalization is 
\[
\x_{k+1} = \x_k + \frac{\|A\|_F^2}{\|A\|^2} (\tilde{b}_{r_k} - \langle \widetilde{A}_{r_k*}|\x_k\rangle) \widetilde{A}_{r_k*}.
\]
Generally, the relaxation parameter ${\|A\|_F^2}/{\|A\|^2} \gg 2$, so
this iterative scheme is not convergent,\footnote{To make sure that it is convergent, the  relaxation parameter should be chosen from $(0,2)$. Usually, the optimal choice is 1.} that is we usually do not have $\E[\|\x_k-\x_*\|^2]\| \leq \rho^k \|\x_0-\x_*\|^2$ for some $\rho<1$. However, we do have $\|\E[\x_k-\x_*]\| \leq (1-\kappa^{-2})^k \|\x_0-\x_*\|$. The proof is straightforward. If we set $T = O(\kappa^2 \log(1/\epsilon))$, then $\|\E[\x_T-\x_*]\| \leq \epsilon \|\x_0-\x_*\|$. There are also some other generalizations \cite{necoara2019faster,richtarik2020stochastic,loizou2020momentum} such that $T=O(\kappa \log(1/\epsilon))$, which is optimal for this kind of iterative schemes. One idea to apply this result is to generate $L$ samples $\x_{T1},\ldots,\x_{TL}$ such that $\| \E[\x_T] - \frac{1}{L} \sum_{l=1}^L \x_{Tl}\| \leq \epsilon \|\E[\x_T]\|$. The matrix Bernstein inequality might be helpful to determine $L$. However, to apply this method, we have to find tight bounds for $\|\x_T-\x_*\|$ and $\E[\|\x_T-\x_*\|^2]$, which seems not that obvious.



\section{Outlook}

In this paper, for solving linear systems, we reduced the separation between quantum and quantum-inspired algorithms from $\kappa_F:\kappa_F^6\kappa^6/\epsilon^4$ to $\kappa_F:\kappa_F^4\kappa^2/\epsilon^2$. It is interesting to consider whether the parameters of our algorithm can be improved further.
For the class of iterative schemes including the Kaczmarz iteration, the optimal convergence rate equals the condition number $\kappa$ (as opposed to $\kappa_F^2$ as in the Kaczmarz method), which is achieved by the conjugate gradient method. However, there is currently no such result for the Kaczmarz iterative schemes themselves.
As discussed in Section \ref{sec:reduceaf}, in \cite{necoara2019faster,richtarik2020stochastic,loizou2020momentum}, it was  proved that there are generalizations of the Kaczmarz method so that $\|\E[\x_T-\x_*]\| \leq (1-\kappa^{-1})^T \|\x_0-\x_*\|$. This convergence result is a little weak but may be helpful to reduce the separation in terms of $\kappa_F$ further.



\section*{Acknowledgement}

We  would like to thank Ryan Mann for useful discussions.
This paper was supported by the QuantERA ERA-NET Cofund in Quantum Technologies implemented within the European
Union's Horizon 2020 Programme (QuantAlgo project), and EPSRC grants EP/L021005/1 and EP/R043957/1. This project has received funding from the European Research Council (ERC) under the European Union’s Horizon 2020 research and innovation programme (grant agreement No.\ 817581).
No new data were created during this study.


\begin{appendices}

\section{Estimation of $\phi$}
\label{app for non-sparse matrices}

Using the notation (\ref{notation of mu}), we can decompose the updating rule (\ref{updating for y'-matrix form})  of $\y$ as follows
\be  \label{app:eq1}
\y_{k+1} = \y_k + \z_k + \z_k',
\ee
where 
\beas
\z_k = \frac{\tilde{b}_{r_k} - \langle \tilde{A}_{r_k*} |  A^\dag| \y_{k} \rangle }{\|A_{r_k*}\|}  \e_{r_k}, \quad
\z_k' = \frac{\mu_k}{\|A_{r_k*}\|} \e_{r_k}.
\eeas

Denote
\be
Z = \|A\x_* - \b\| = \min_{\x} \|A\x - \b\|, \quad
\Lambda = {\rm diag} (\|A_{i*}\|^2: i \in[m]).
\ee
In the following, for any two vectors $\a,\b$, we define $\langle \a|\b\rangle_{\Lambda} = \langle \a|\Lambda|\b\rangle$. To bound $\phi$, it suffices to bound $\|\y_T\|_\Lambda^2$. 
From (\ref{app:eq1}), it is plausible that
$\|\y_{k+1}\|_\Lambda^2 \approx \|\y_{k}\|_\Lambda^2 + \|\z_{k}\|_\Lambda^2 + \|\z_{k}'\|_\Lambda^2$. In the following, we shall prove that in fact this holds up to a constant factor on average. We shall choose the initial vector as 0 for simplicity, i.e., $\y_0=0$.

First, we consider the case  $Z = 0$, that is $\b = A\x_*$. In the following, we first fix $k,r_k$ and compute the mean value over the random variable $D$, then we compute the mean value over $r_k$ by fixing $k$. Finally, we calculate the mean value over the random variable $k$.

From now on, we assume that $d = O((\kappa_F^2/\epsilon^{2}) (\log 1/\epsilon))$ and $T=O(\kappa_F^2\log(1/\epsilon))$.
By Lemma \ref{lem:mean and variance}, 
\beas
\E_D[\|\z_k'\|_\Lambda^2]
&\leq& \frac{\|A\|_F^2 \|\x_k\|^2}{d\min_{j \in [n]} \|A_{*j}\|^2} 
= \frac{\epsilon^2\|\x_k\|^2}{4T} \leq \frac{\epsilon^2(\|\x_k-\x_*\|^2+\|\x_*\|^2)}{2T}, \\
\E_D[\langle \z_k|\z_k'\rangle_\Lambda]
&=& \left( \tilde{b}_{r_k} - \langle \tilde{A}_{r_k*}|\x_k\rangle
\right) \E_D[\mu_k] = 0, \\
\E_D[\langle \y_k|\z_k'\rangle_\Lambda]
&=& \|A_{r_k*}\| \langle \y_k|\e_{r_k}\rangle  \E_D[\mu_k] = 0.
\eeas
About the norm of $\z_k$, we have
\beas
\E[\|\z_k \|_\Lambda^2] &=& \sum_{r_k = 1}^m \frac{{\|A_{r_k*}\|}^2}{\|A\|_F^2} \frac{(\tilde{b}_{r_k} - \langle \tilde{A}_{r_k*} |  A^\dag| \y_{k} \rangle)^2 }{\|A_{r_k*}\|^2} \|A_{r_k*}\|^2 \\
&=& \frac{1}{\|A\|_F^2} \sum_{r_k = 1}^m  (b_{r_k} - \langle A_{r_k}| \x_{k} \rangle)^2 \\
&\leq& 2\frac{\|\b\|^2 + \|A\|_F^2 \|\x_k\|^2}{\|A\|_F^2}
\leq 2\frac{\|\b\|^2 }{\|A\|_F^2}
+4 \|\x_*\|^2 + 4\|\x_*-\x_k\|^2.
\eeas
As for the inner product between $\y_k$ and $\z_k$, we have the following estimate
\beas
\E[\langle \y_k|\z_k \rangle_\Lambda] &=& \sum_{r_k = 1}^m \frac{{\|A_{r_k*}\|}^2}{\|A\|_F^2} \frac{(\tilde{b}_{r_k} - \langle \tilde{A}_{r_k*} |  A^\dag| \y_{k} \rangle) }{\|A_{r_k*}\|} \langle \y_k|\e_{r_k}\rangle \|A_{r_k*}\|^2 \\
&=& \sum_{r_k = 1}^m \frac{{\|A_{r_k*}\|}^2}{\|A\|_F^2} {(b_{r_k} - \langle A_{r_k*} |  A^\dag| \y_{k} \rangle) } \langle \y_k|\e_{r_k}\rangle  \\
&=& \frac{\langle \b| \y_k\rangle_\Lambda - \|\x_k\|_\Lambda^2}{\|A\|_F^2} 
= \frac{\langle \x_*|A^\dag| \y_k\rangle_\Lambda - \|\x_k\|_\Lambda^2}{\|A\|_F^2} \\
&=& \frac{\langle \x_*| \x_k\rangle_\Lambda - \|\x_k\|_\Lambda^2}{\|A\|_F^2} 
\leq  \|\x_* \| \|\x_k\| + \|\x_k\|^2\\ 
&\leq&  3\|\x_* \|^2 +\|\x_* \|\|\x_k-\x_*\| +2\|\x_k-\x_*\|^2 \\
&\leq&  \frac{7}{2}\|\x_* \|^2 +\frac{5}{2}\|\x_k-\x_*\|^2 .
\eeas
In the above, we used the fact that for any two vectors $\a,\b$ we have
$|\langle \a| \b\rangle_\Lambda| \leq \|\Lambda\|^2 |\langle \a| \b\rangle|$, and $\|\Lambda\|^2 \leq \|A\|_F^2$.

Hence, we have
\beas
\E[\|\y_{k+1}\|_\Lambda^2] 
&=& 
\E[\|\y_{k}\|_\Lambda^2] + \E[\|\z_{k}\|_\Lambda^2] + \E[\|\z_{k}'\|_\Lambda^2] 
+2\E[\langle \y_k |\z_k\rangle_\Lambda] \\
&\leq& \E[\|\y_k \|_\Lambda^2] + 2\frac{\|\b\|^2 }{\|A\|_F^2}
+ (9+ \frac{\epsilon^2}{2T})\E[\|\x_k-\x_*\|^2] + (11+\frac{\epsilon^2}{2T}) \|\x_*\|^2 \\
&\leq& \E[\|\y_k \|_\Lambda^2] + 2\frac{\|\b\|^2 }{\|A\|_F^2}
+ (20 + \frac{\epsilon^2}{T}) \|\x_*\|^2,
\eeas
where we use that $\E[\|\x_k-\x_*\|^2] \leq \|\x_*\|^2$ by Lemma \ref{lem:converge rate}.
Therefore,
\beas
\E[\|\y_T \|_\Lambda^2] \leq T\left(2\frac{\|\b\|^2 }{\|A\|_F^2}
+ (20 + \frac{\epsilon^2}{T}) \|\x_*\|^2\right).
\eeas
This means that, with high probability,
\[
\phi = T \frac{\|\y_T\|_{\Lambda}^2}{\|\x_T\|^2} 
\leq T^2\left(2\frac{\|\b\|^2 }{\|A\|_F^2\|\x_T\|^2}
+ (20 + \frac{\epsilon^2}{T}) \frac{\|\x_*\|^2}{\|\x_T\|^2}\right)
=O(T^2).
\]

When $Z\neq 0$, then $\b = A\x_* + \c$ for some vector $\c$ of norm $Z$ that is not in the column space of $A$. This can happen when $A$ is not full rank. Since $\c$ is independent of $A$, we cannot bound it in terms of $A$. In this case, the only change is $\E[\langle \y_k|\z_k \rangle_\Lambda]$, which is now bounded by
\beas
\E[\langle \y_k|\z_k \rangle_\Lambda] &\leq& \frac{\langle \c|\E[\y_k]\rangle_\Lambda}{\|A\|_F^2}+\frac{7}{2}\|\x_* \|^2 +\frac{5}{2}\|\x_k-\x_*\|^2 \\
&\leq& \frac{\|A\|^2}{\|A\|_F^2}\|\c\| \|\E[\y_k]\| + \frac{7}{2}\|\x_* \|^2 +\frac{5}{2}\|\x_k-\x_*\|^2 \\
&=& \frac{\kappa^2Z}{\kappa_F^2} \|\E[\y_k]\| + \frac{7}{2}\|\x_* \|^2 +\frac{5}{2}\|\x_k-\x_*\|^2.
\eeas
At the end, we obtain
\beas
\E[\|\y_{k+1}\|_\Lambda^2] 
\leq \E[\|\y_k\|_\Lambda^2] + 2\frac{\|\b\|^2 }{\|A\|_F^2} +
 (20 + \frac{\epsilon^2}{T}) \|\x_*\|^2 + \frac{2\kappa^2Z}{\kappa_F^2} \|\E[\y_k]\|.
\eeas
From (\ref{app:eq1}), we know that
\[
\E[\y_{k+1}]
=\left( I - \frac{AA^\dag}{\|A\|_F^2}\right)\E[\y_k] + \frac{\b}{\|A\|_F^2}.
\]
This means
\[
\|\E[\y_k]\| = \left\|\sum_{i=0}^{k-1} \left( I - \frac{AA^\dag}{\|A\|_F^2}\right)^i \frac{\b}{\|A\|_F^2}\right\|
\leq \sum_{i=0}^{k-1} ( 1 - \kappa_F^{-2})^i \frac{\|\b\|}{\|A\|_F^2}
\leq \frac{\kappa_F^2 \|\b\|}{\|A\|_F^2}.
\]
Therefore, 
\beas
\E[\|\y_T\|_\Lambda^2] 
&\leq&
T\left( \frac{2\|\b\|^2  }{\|A\|_F^2}  +
 (20 + \frac{\epsilon^2}{T}) \|\x_*\|^2 +  \frac{2\kappa^2 \|\b\|Z}{\|A\|_F^2} \right) \\
&=&
T\left( \frac{2\|A\x_*\|^2  }{\|A\|_F^2}  +
 (20 + \frac{\epsilon^2}{T}) \|\x_*\|^2 +  \frac{2\kappa^2 \|\b\|Z + 2Z^2}{\|A\|_F^2}\right) .
\eeas
Finally, by Markov's inequality, with a high probability  we have
\[
\phi = T \frac{\|\y_T\|_{\Lambda}^2}{\|\x_T\|^2} = O\left(T^2+ T^2 \frac{\kappa^2 \|\b\|Z + Z^2}{\|A\|_F^2\|\x_*\|^2}  \right) ,
\]
where we used the fact that $\x_T \approx \x_*$ and $\|A\x_*\|^2 / \|A\|_F^2 \leq \|\x_*\|^2$.

Let $Z=\eta \|\b\|$, then  $\|\x_*\|^2 \geq \|A\x_*\|^2/\|A\|^2 = (1-\eta^2)\|\b\|^2/\|A\|^2$. So
\bes
T^2 \frac{\kappa^2 \|\b\|Z}{\|A\|_{\F}^2\|\x_*\|^2}
\leq 
\frac{\eta}{1-\eta^2} \kappa_F^2 \kappa^4.
\ees
This means 
\[
\phi = O\left(\kappa_F^4 + \frac{\eta}{1-\eta^2} \kappa_F^2 \kappa^4 \right).
\]


\section{Estimation of the convergence rate}
\label{app for the convergence rate}

For simplicity, we assume that $A\x_* = \b$. Following the analysis of \cite{moorman2020randomized}, the convergence rate does not change too much if the LSP is not consistent. 
By the updating formula (\ref{kaczmarz average:dense}), we have
\beas
\x_{k+1} - \x_* &=& \x_k - \x_* + \frac{1}{2} \sum_{i\in\T_k} (\tilde{b}_{i} - \langle \tilde{A}_{i*} |\x_{k}\rangle )  \ket{\tilde{A}_{i*}}
+ \frac{1}{2} \sum_{i\in\T_k} \langle\tilde{A}_{i*} | I - D_i |\x_{k}\rangle \ket{\tilde{A}_{i*}} \\
&=& \left( I - \frac{1}{2} \sum_{i\in\T_k}  \ket{\tilde{A}_{i*}} \bra{\tilde{A}_{i*}}\right) (\x_k - \x_*)
+ \frac{1}{2} \sum_{i\in\T_k} \langle\tilde{A}_{i*} | I - D_i |\x_{k}\rangle \ket{\tilde{A}_{i*}}.
\eeas
Below, we try to bound $\E[\|\x_{k+1} - \x_*\|^2]$. We will follow the notation of (\ref{notation of mu}). But here to avoid any confusion, we denote
\[
\mu_{ik} := \langle\tilde{A}_{i*} | I - D_i |\x_{k} \rangle.
\]
In the following, the result of Lemma \ref{lem:mean and variance} will be used, and $|\T_k| = \|A\|_F^2/\|A\|^2$.

First, we have
\beas
\E_{D}\left[\left\|\sum_{i\in\T_k} \langle\tilde{A}_{i*} | I - D_i |\x_{k} \rangle \ket{\tilde{A}_{i*}}\right\|^2\right] 
&=& \sum_{i,j\in\T_k} \langle \tilde{A}_{i*}|\tilde{A}_{j*}\rangle  \E_{D_i,D_j}\left[ \mu_{ik}\mu_{jk} \right] \\
&=& \sum_{i\in\T_k}  \E_{D_i} [ \mu_{ik}^2 ]
+\sum_{i\neq j} \langle \tilde{A}_{i*}|\tilde{A}_{j*}\rangle  \E_{D_i}\left[ \mu_{ik}\right] \E_{D_j}[\mu_{jk}] \\
&\leq& \frac{1}{d}
\sum_{i\in\T_k} \sum_{j=1}^n 
\tilde{A}_{i,j}^2 x_{k,j}^2 \frac{\|A\|_F^2}{\|A_{*j}\|^2}.
\eeas
By Lemma \ref{lem:mean and variance}, we have
\beas
&& \E_D\left[\left\langle \left( I - \frac{1}{2} \sum_{i\in\T_k}  \ket{\tilde{A}_{i*}} \bra{\tilde{A}_{i*}}\right) (\x_k - \x_*) \Bigg| \sum_{i\in\T_k} \mu_{ik} \ket{\tilde{A}_{i*}} \right\rangle \right] \\
&=&  \left\langle \left( I - \frac{1}{2}\sum_{i\in\T_k}  \ket{\tilde{A}_{i*}} \bra{\tilde{A}_{i*}}\right) (\x_k - \x_*) \Bigg| \sum_{i\in\T_k} \E_D[\mu_{ik}] \ket{\tilde{A}_{i*}} \right\rangle = 0.
\eeas
Hence, after computing the mean value over $D$, we have
\beas
\|\x_{k+1} - \x_*\|^2
\leq \left\| \left( I - \frac{1}{2} \sum_{i\in\T_k}  \ket{\tilde{A}_{i*}} \bra{\tilde{A}_{i*}}\right) (\x_k - \x_*) \right\|^2
+  \frac{1}{4d}
\sum_{i\in\T_k} \sum_{j=1}^n 
\tilde{A}_{i,j}^2 x_{k,j}^2 \frac{\|A\|_F^2}{\|A_{*j}\|^2}.
\eeas

For the first term, we now compute its mean value over the random variable $\T_k$
\beas
&& \E\left[\left\| \left( I -\frac{1}{2} \sum_{i\in\T_k}  \ket{\tilde{A}_{i*}} \bra{\tilde{A}_{i*}}\right) (\x_k - \x_*) \right\|^2\right] \\
&=& \langle \x_k - \x_* | 
\E\left[
\left( I -\frac{1}{2} \sum_{i\in\T_k}  \ket{\tilde{A}_{i*}} \bra{\tilde{A}_{i*}}\right)^2
\right]
| \x_k - \x_*\rangle.
\eeas
And it can be shown that
\beas
&& \E\left[
\left( I -\frac{1}{2} \sum_{i\in\T_k}  \ket{\tilde{A}_{i*}} \bra{\tilde{A}_{i*}}\right)^2
\right] \\
&=& \E\left[I - \frac{3}{4}\sum_{i\in\T_k}  \ket{\tilde{A}_{i*}} \bra{\tilde{A}_{i*}} + \frac{1}{4} \sum_{i,j\in\T_k, i\neq j} \ket{\tilde{A}_{i*}} \langle \tilde{A}_{i*}|\tilde{A}_{j*}\rangle \bra{\tilde{A}_{j*}}\right] \\
&=& I -  \frac{3q}{4}\frac{A^\dag A}{\|A\|_F^2} +  \frac{1}{4}  (q^2-q) \left( \frac{A^\dag A}{\|A\|_F^2}  \right)^2 \\
&\preceq& \left(1-\frac{3q}{4} \frac{\sigma_{\min}(A^\dag A)}{\|A\|_F^2}+  \frac{1}{4}  (q^2-q) \left(\frac{\sigma_{\min}(A^\dag A)}{\|A\|_F^2}  \right)^2 \right) I \\
&\preceq& \left(1-\frac{1}{2\kappa^2} \right) I.
\eeas
In the last step, we used the result  $q=\|A\|_F^2/\|A\|^2$ so that the second term is $3/4\kappa^2$ and the third term is less than $1/4\kappa^4 \leq 1/4\kappa^2$.

Therefore, 
\beas
\E[\|\x_{k+1} - \x_*\|^2]
&\leq& (1-\frac{1}{2\kappa^2}) \E[\|\x_{k} - \x_*\|^2]
+  \frac{1}{4d}
\E\left[\sum_{i\in\T_k} \sum_{j=1}^n 
\tilde{A}_{i,j}^2 x_{k,j}^2 \frac{\|A\|_F^2}{\|A_{*j}\|^2} \right] \\
&=& (1-\frac{1}{2\kappa^2}) \E[\|\x_{k} - \x_*\|^2]
+  \frac{\|A\|_F^2}{4d\|A\|^2}
\E[\|\x_k\|^2].
\eeas
Now set $T = O(\kappa^2 \log(2/\epsilon^2))$, and
\be \label{app:value of d}
d = \frac{T \|A\|_F^2}{\|A\|^2 \epsilon^2} 
= \frac{\kappa_F^2\log(2/\epsilon^2)}{\epsilon^2}.
\ee
Then
\beas
\E[\|\x_{k+1} - \x_*\|^2]
&\leq& (1-\frac{1}{2\kappa^2}) \E[\|\x_{k} - \x_*\|^2]
+ \frac{\epsilon^2}{4T} \E[\|\x_k\|^2] \\
&\leq& (1-\frac{1}{2\kappa^2}+ \frac{\epsilon^2}{2T}) \E[\|\x_{k} - \x_*\|^2]
+ \frac{\epsilon^2}{2T}  \|\x_*\|^2.
\eeas
Finally, we obtain
\beas
\E[\|\x_T - \x_*\|^2]
\lesssim (1-\frac{1}{2\kappa^2} + \frac{\epsilon^2}{2T})^T \|\x_{0} - \x_*\|^2
+  \frac{\epsilon^2}{2}  \|\x_*\|^2 
\leq \epsilon^2 \|\x_*\|^2 .
\eeas

\section{Estimation of $\phi$ for Kaczmarz method with averaging}
\label{app:Estimation of phi for Kaczmarz with averaging}

The calculation here is similar to that in Appendix  \ref{app for non-sparse matrices}.
For simplicity, denote
\bes
\y_{k+1} = \y_k + \frac{1}{2} \w_k +  \frac{1}{2}  \w_k ',
\ees
where
\beas
\w_k := \sum_{i \in \T_k} \frac{b_{i} - \langle A_{i*}| \x_k \rangle   }{\|A_{i*}\|^2} \e_{i*},
\quad
\w_k ' :=  \sum_{i\in\T_k} \frac{\mu_{ik}}{\|A_{i*}\|} \e_{i*}.
\eeas

In this section, $d$ is given in the formula (\ref{app:value of d}).
From a similar estimation in Appendix \ref{app for non-sparse matrices}, we have $\E_D[\langle \w_k|\w_k'\rangle_\Lambda]
=\E_D[\langle \y_k|\w_k'\rangle_\Lambda]
 = 0$ and
\beas
\E_{r_k}\E_D[\|\w_k'\|_\Lambda^2] 
\leq \frac{\|A\|_F^2}{\|A\|^2} \frac{1}{d}  \sum_{j=1}^n 
\E_{r_k}[\tilde{A}_{r_k,j}^2] x_{k,j}^2 \frac{\|A\|_F^2}{\|A_{*j}\|^2} 
= \frac{\epsilon^2\|\x_k\|^2}{4T} \leq \frac{\epsilon^2(\|\x_k-\x_*\|^2+\|\x_*\|^2)}{2T}.
\eeas
As for $\|\w_k\|_\Lambda^2$, we still have
\beas
\E[\|\w_k\|_\Lambda^2]
&=&\E\left[ \sum_{i,j \in \T_k} \frac{b_{i} - \langle A_{i*}| \x_k \rangle   }{\|A_{i*}\|^2} \frac{b_{j} - \langle A_{j*}| \x_k \rangle   }{\|A_{j*}\|^2} \langle \e_{i*}|\e_{j*} \rangle \|A_{i*}\|^2 \right] \\
&=& \frac{\|A\|_F^2}{\|A\|^2} \E\left[\frac{(b_{i} - \langle A_{i*}| \x_k \rangle)^2 }{\|A_{i*}\|^2}\right] \\
&=& \frac{\|\b - A\x_k\|^2}{\|A\|^2} \\
&\leq& \frac{2\|\b\|^2}{\|A\|^2} + 2\|\x_k\|^2.
\eeas
By the estimation of $\E[\langle \y_k|\z_k\rangle_{\Lambda}]$ in Appendix \ref{app for non-sparse matrices} and noting that $\|\Lambda\|\leq \|A\|$, we also have
\beas
\E[\langle \y_k|\w_k\rangle_{\Lambda}]
= \frac{\|A\|_F^2}{\|A\|^2} \E[\langle\y_k|\frac{b_{i} - \langle A_{i*}| \x_k \rangle   }{\|A_{i*}\|^2} \e_{i*}\rangle]
\leq \frac{7}{2}\|\x_* \|^2 +\frac{5}{2}\|\x_k-\x_*\|^2 .
\eeas
All the estimations above do not change. The constant 1/2 in the decomposition of  $\y_{k+1}$ does not affect the upper bound, so when $Z=0$ we have
\beas
\E[\|\y_T \|_\Lambda^2] \leq T\left(2\frac{\|\b\|^2 }{\|A\|_F^2}
+ (20 + \frac{\epsilon^2}{T}) \|\x_*\|^2\right),
\eeas
where $T=\widetilde{O}(\kappa^2)$.
Therefore
\[
\phi = T\frac{\kappa_F^2}{\kappa^2} \frac{\|\y_T\|_{\Lambda}^2}{\|\x_T\|^2} 
\leq T^2\frac{\kappa_F^2}{\kappa^2}\left(2\frac{\|\b\|^2 }{\|A\|_F^2\|\x_T\|^2}
+ (20 + \frac{\epsilon^2}{T}) \frac{\|\x_*\|^2}{\|\x_T\|^2}\right)
=O(\kappa_F^2\kappa^2).
\]
When $Z\neq 0$, we similarly have 
\[
\phi = O\left(\kappa_F^2 \kappa^2 + \frac{\eta}{1-\eta^2} \kappa^6\right).
\]

\end{appendices}

\bibliographystyle{plain}
\bibliography{main}

\end{document}